\newcommand{\R}{\mathbb R}
\newcommand{\eps}{\varepsilon}
\newcommand{\de}{{\rm d}}
\newcommand{\Rd}{\mathbb R_X\!\left(d\right)}
 \newcommand{\Rslb}{\ushort{\mathbb R}_X\!\left(d\right)}
\begin{document}
\large \title{A lower bound on the differential entropy of log-concave random vectors with applications}

\author{Arnaud Marsiglietti$^1$\thanks{Supported by the Walter S. Baer and Jeri Weiss CMI Postdoctoral Fellowship.} ~and Victoria Kostina$^2$\thanks{Supported in part by the National Science Foundation (NSF)
under Grant CCF-1566567.} \\[10pt] California Institute of Technology \\ Pasadena, CA 91125, USA \\ $^1$amarsigl@caltech.edu \qquad $^2$vkostina@caltech.edu
}

\date{}

\maketitle

\begin{abstract}

We derive a lower bound on the differential entropy of a log-concave random variable $X$ in terms of the $p$-th absolute moment of $X$. The new bound leads to a reverse entropy power inequality with an explicit constant, and to new bounds on the rate-distortion function and the channel capacity.

Specifically, we study the rate-distortion function for log-concave sources and distortion measure $| x - \hat x|^r$, and we establish that the difference between the rate distortion function and the Shannon lower bound is at most $\log(\sqrt{\pi e}) \approx 1.5$ bits, independently of $r$ and the target distortion $d$. For mean-square error distortion, the difference is at most $\log (\sqrt{\frac{\pi e}{2}}) \approx 1$ bits, regardless of $d$.

We also provide bounds on the capacity of memoryless additive noise channels when the noise is log-concave. We show that the difference between the capacity of such channels and the capacity of the Gaussian channel with the same noise power is at most $\log (\sqrt{\frac{\pi e}{2}}) \approx 1$ bits.

Our results generalize to the case of vector $X$ with possibly dependent coordinates, and to $\gamma$-concave random variables. Our proof technique leverages tools from convex geometry.

\end{abstract}

\vspace*{0.5cm}

\noindent
{\bf Keywords.}
Differential entropy, reverse entropy power inequality, rate-distortion function, Shannon lower bound, channel capacity, log-concave distribution, hyperplane conjecture.

\section{Introduction}
\label{sec:intro}

It is well known that the differential entropy among all zero-mean random variables with the same second moment is maximized by the Gaussian distribution:
\begin{equation}
 h(X) \leq \log(\sqrt {2 \pi e \EE[|X|^2]}). \label{h2}
\end{equation}
More generally, the differential entropy under $p$-th moment constraint is upper bounded as (see e.g. \cite[Appendix 2]{ZF}), for $p>0$,
\begin{equation}
\begin{aligned}
%~& \frac{1}{p} \log\left( \frac{2^p}{\Gamma(p+1)} \EE[|X|^p] \right)  \\
%& \qquad \qquad \leq 
h(X) \leq \log \left( \alpha_p \|X\|_p \right), \label{hp}
\end{aligned}
\end{equation}
where
\begin{align}
 \alpha_p &\triangleq 2 e^{\frac 1 p} \Gamma \left( 1 + \frac{1}{p} \right) p^{\frac 1 p}, \label{alphar} \\
 \|X\|_p &\triangleq \left(\EE[|X|^p]\right)^{\frac 1 p}.
\end{align}
Here, $\Gamma$ denotes the Gamma function. Of course, if $p = 2$, $\alpha_p = \sqrt{2 \pi e}$, and \eqref{hp} reduces to \eqref{h2}. 
A natural question to ask is whether a matching lower bound on $h(X)$ can be found in terms of $p$-norm of $X$, $\|X\|_p$.  The quest is meaningless without additional assumptions on the density of $X$, as $h(X) = -\infty$ is possible even if $\|X\|_p$ is finite. In this paper, we show that if the density of $X$, $f_X(x)$, is \emph{log-concave} (that is,  $\log f_X(x)$ is concave), then $h(X)$ stays within a constant from the upper bound in \eqref{hp} (see \thmref{variable-gen} in Section \ref{sec:main} below):
\begin{eqnarray}
h(X) \geq \log \frac{2 \|X - \EE[X]\|_p }{\Gamma(p+1)^{\frac 1 p}}, \label{eq:var-gen}
\end{eqnarray}
where $p \geq 1$. Moreover, the bound \eqref{eq:var-gen} tightens for $p=2$, where we have
\begin{eqnarray}
h(X) \geq \frac{1}{2} \log (4 \Var{X}). \label{eq:var-gen-2}
\end{eqnarray}

The bound \eqref{eq:var-gen} actually holds for $p>-1$ if, in addition to being log-concave, $X$ is \emph{symmetric} (that is, $f_X(x) = f_X(-x)$),  (see \thmref{variable} in Section \ref{sec:main} below).

The class of log-concave distributions is rich and contains important distributions in probability, statistics and analysis. Gaussian distribution, Laplace distribution, uniform distribution on a convex set, chi distribution are all log-concave. The class of log-concave random vectors has good behavior under natural probabilistic operations: namely, a famous result of Pr\'ekopa \cite{P} states that sums of log-concave random vectors, as well as marginals of log-concave random vectors, are log-concave. Furthermore, log-concave distributions have moments of all orders.

Together with the classical bound in \eqref{hp}, the bound in \eqref{eq:var-gen} tells us that entropy and moments of log-concave random variables are comparable.

Using a different proof technique, Bobkov and Madiman \cite{BM} recently showed that the differential entropy of a log-concave $X$ satisfies,
\begin{equation}
h(X) \geq \frac{1}{2} \log\left( \frac 1 2 \Var X \right). \label{eq:BM}
\end{equation}
Our results in \eqref{eq:var-gen} and \eqref{eq:var-gen-2} tighten \eqref{eq:BM}, in addition to providing a comparison with other moments.

Furthermore, this paper generalizes the lower bound on differential entropy in \eqref{eq:var-gen} to random vectors. If the random vector $X = (X_1, \ldots, X_n)$ consists of independent random variables, then the differential entropy of $X$ is equal to the sum of differential entropies of the component random variables, one can trivially apply \eqref{eq:var-gen} component-wise to obtain a lower bound on $h(X)$. In this paper, we show that even for nonindependent components, as long as the density of the random vector $X$ is log-concave and satisfies a symmetry condition, its differential entropy is bounded from below in terms of covariance matrix of $X$ (see \thmref{vector} in \secref{sec:main} below). As noted in \cite{BM-hyperplane}, such a generalization is related to the famous hyperplane conjecture in convex geometry. We also extend our results to a more general class of random variables, namely, the class of $\gamma$-concave random variables, with $\gamma < 0$. 

The bound \eqref{eq:var-gen} on the differential entropy allow us to derive reverse entropy power inequalities with explicit constants. The fundamental entropy power inequality of Shannon \cite{shannon1948mathematical} and Stam \cite{stam1959some} states that for every independent continuous random vector $X$ and $Y$ in $\R^n$,
\begin{equation}\label{EPI}
N(X+Y) \geq N(X) + N(Y),
\end{equation}
where
\begin{equation}\label{entropy-power}
N(X) = e^{\frac{2}{n} h(X)}
\end{equation}
denotes the entropy power of $X$. It is of interest to characterize distributions for which a reverse form of \eqref{EPI} holds. In this direction, it was shown by Bobkov and Madiman \cite{BM2} that given any continuous log-concave random vectors $X$ and $Y$ in $\R^n$, there exist affine volume-preserving maps $u_1, u_2$ such that a reverse entropy power inequality holds for $u_1(X)$ and $u_2(Y)$:
\begin{equation}\label{reverse-EPI-BM}
N(u_1(X) + u_2(Y)) \leq c(N(u_1(X)) + N(u_2(Y))) = c(N(X) + N(Y)),
\end{equation}
for some universal constant $c \geq 1$ (independent of the dimension).

In applications, it is important to know the precise value of the constant $c$ that appears in \eqref{reverse-EPI-BM}. It was shown by Cover and Zhang \cite{CZ} that if $X$ and $Y$ are identically distributed (possibly dependent) log-concave random variables, then
\begin{equation}\label{eq:rev-EPI-iid}
N(X+Y) \leq 4 N(X).  
\end{equation}
Inequality \eqref{eq:rev-EPI-iid} easily extends to random vectors (see \cite{MK}). A similar bound for difference of i.i.d. log-concave random vectors was obtained in \cite{BM3}, and reads as
\begin{equation}
N(X-Y) \leq e^2 N(X).  
\end{equation}
Recently, a new form of reverse entropy power inequality was investigated in \cite{BNT}, and a general reverse entropy power-type inequality was developed in \cite{C}. For further details, we refer to the survey paper \cite{MMX}. In \secref{reverse-EPI}, we provide explicit constants for non-identically distributed and uncorrelated log-concave random vectors (possibly dependent). In particular, we prove that as long as log-concave random variables $X$ and $Y$ are uncorrelated,
\begin{equation}\label{eq:rev-EPI}
N(X+Y) \leq \frac{\pi e}{2} (N(X) + N(Y)).
\end{equation}
A generalization of \eqref{eq:rev-EPI} to arbitrary dimension is stated in \thmref{reverse-vector} in Section \ref{sec:main} below.

The bound \eqref{eq:var-gen} on the differential entropy is essential in the study of the difference between the rate-distortion function and the Shannon lower bound that we describe next. 
Given a nonnegative number $d$, the rate-distortion function $\Rd$ under $r$-th moment distortion measure is defined as
\begin{eqnarray}
\Rd = \inf_{\substack{ P_{\hat X | {X}} \colon\\ \EE[|X-\hat{X}|^r] \leq d }} I(X;\hat{X}), \label{eq:Rd}
\end{eqnarray}
where the infimum is over all transition probability kernels $\mathbb R \mapsto \mathbb R$ satisfying the moment constraint. The celebrated Shannon lower bound \cite{shannon1959coding} states that the rate-distortion function is lower bounded by
\begin{align}
\Rd &\geq \Rslb \\
&\triangleq h(X) -\log \left( \alpha_r d^{\frac 1 r} \right), \label{eq:slb}
\end{align}
where $\alpha_r$ is defined in \eqref{alphar}. 
For mean-square distortion ($r = 2$), \eqref{eq:slb} simplifies as
\begin{eqnarray}
\Rd \geq h(X) - \log\sqrt{2\pi e d}.  \label{eq:slb2}
\end{eqnarray}
The Shannon lower bound states that the rate-distortion function is lower bounded by the difference between the differential entropy of the source and the term that increases with target distortion $d$, explicitly linking the storage requirements for $X$ to the information content of $X$ (measured by $h(X)$) and the desired reproduction distortion $d$. 
As shown in \cite{linkov1965evaluation,linder1994asymptotic,koch2015shannonlb} under progressively less stringent assumptions,\footnote{Koch \cite{koch2015shannonlb} showed that \eqref{eq:linkov} holds as long as $H(\lfloor X \rfloor ) < \infty$.} the Shannon lower bound is tight in the limit of low distortion,
\begin{eqnarray}
0 \leq \Rd - \Rslb \xrightarrow[d \to 0]{} 0. \label{eq:linkov}
\end{eqnarray}
The speed of convergence in \eqref{eq:linkov} and its finite blocklength refinement were recently explored in \cite{K1,K2}. Due to its simplicity and tightness in the high resolution / low distortion limit, the Shannon lower bound can serve as a proxy for the rate-distortion function $\Rd$, which rarely has an explicit representation. Furthermore, the tightness of the Shannon lower bound at low $d$ is linked to the optimality of simple lattice quantizers \cite{K1,K2}, an insight which has evident practical significance.
Gish and Pierce \cite{gish1968asymptotically} showed that for mean-square error distortion, the difference between the entropy rate of a dithered scalar quantizer, $H_1$, and the rate-distortion function $\Rd$ converges to $\frac 1 2 \log \frac {2 \pi e}{12} \approx 0.254$ bit/sample in the limit $d \downarrow 0$. Ziv \cite{ziv1985universal} proved that $H_1 - \Rd$ is bounded by $\frac 1 2 \log \frac {2 \pi e} 6 \approx 0.754$ bit/sample, universally in $d$.

In this paper, we show that the gap between $\Rd$ and $\Rslb$ is bounded universally in $d$, provided that the source density is log-concave: for mean-square error distortion ($r = 2$ in \eqref{eq:Rd}), we have
\begin{equation}
\Rd - \Rslb \leq \log \sqrt{\frac{\pi e}{2}} \approx 1.05 \mbox{ bits}. 
\end{equation}
\figref{fig:slb-gen} presents our bound for different values of $r$. Regardless of $r$ and $d$, 
\begin{equation}
 \Rd - \Rslb \leq \log(\sqrt{\pi e}) \approx 1.55 \mbox{ bits}. 
\end{equation}

\begin{figure}[htp]
\centerline{\includegraphics[width=8cm,height=5cm]{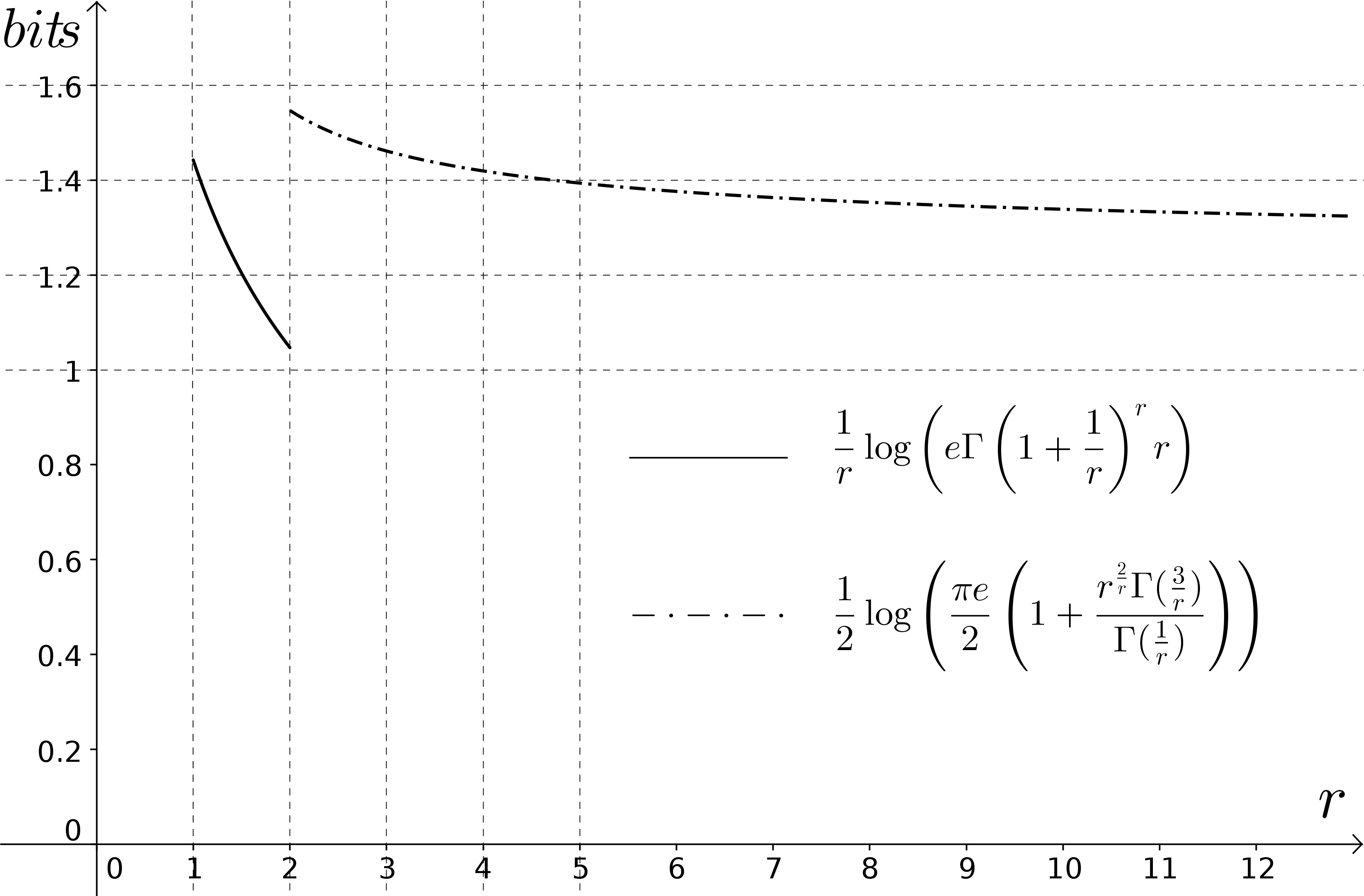}}
\caption{The bound on the difference between the rate-distortion function under $r$-th moment constraint and the Shannon lower bound, stated in Corollary \ref{cons-gen}.}
\label{fig:slb-gen}
\end{figure}

The bounds in \figref{fig:slb-gen} tighten for symmetric log-concave sources when $r \in (2;4.3)$. \figref{fig:slb} presents this tighter bound for different values of $r$. Regardless of $r$ and $d$, 
\begin{equation}
 \Rd - \Rslb \leq \log(e) \approx 1.44 \mbox{ bits}. 
\end{equation}

\begin{figure}[htp]
\centerline{\includegraphics[width=8cm,height=5cm]{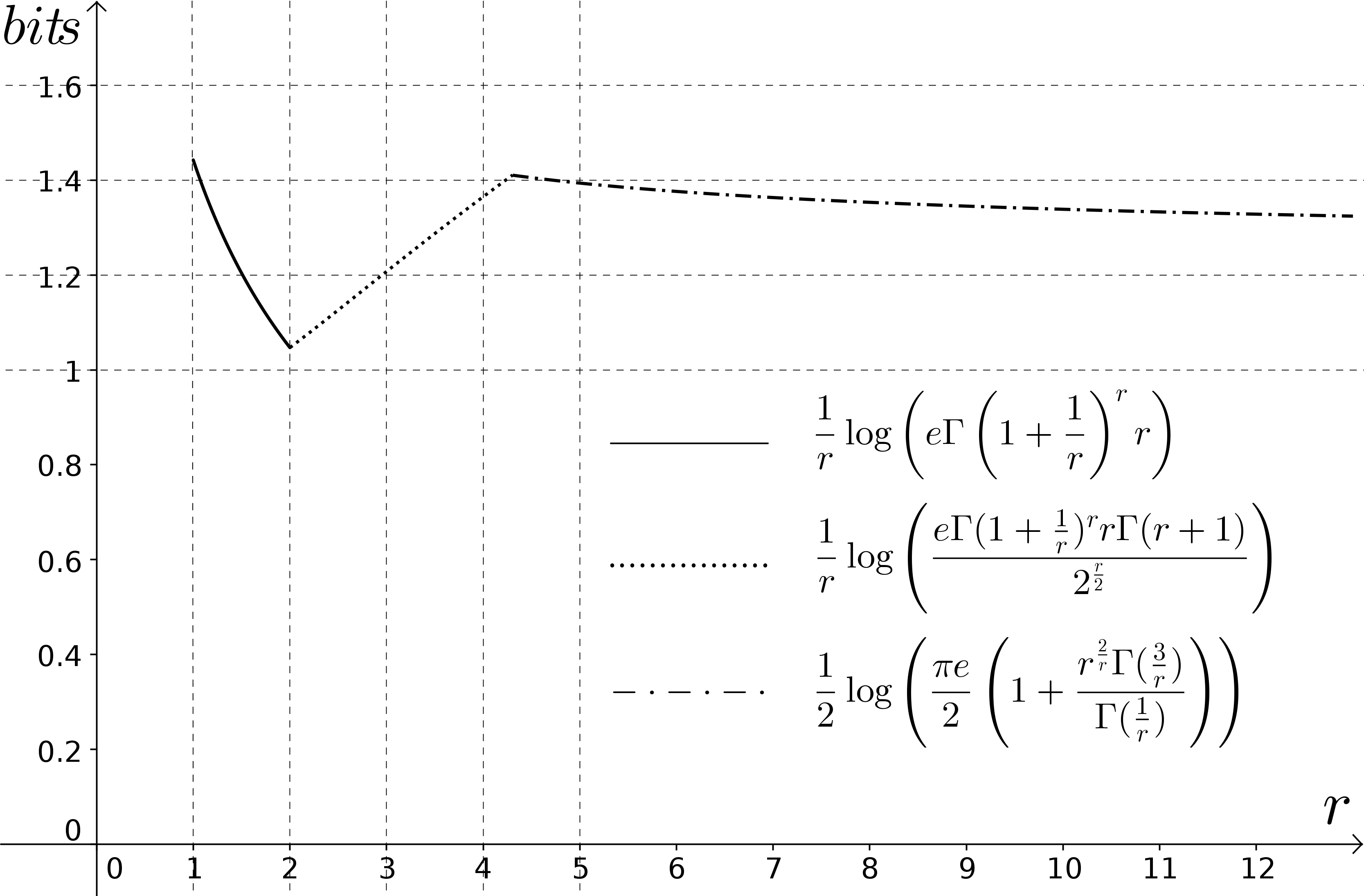}}
\caption{The bound on the difference between the rate-distortion function under $r$-th moment constraint and the Shannon lower bound, stated in Corollary \ref{cons}.}
\label{fig:slb}
\end{figure}

We also provide bounds on the rate-distortion function for vector $X$ (see Theorem \ref{rate-vector} in \secref{sec:main}).

Besides leading to the reverse entropy power inequality and the reverse Shannon lower bound, the new bounds on the differential entropy allow us to bound the capacity of additive noise memoryless channels, provided that the noise follows a log-concave distribution.

The capacity of a  channel that adds a memoryless noise $Z$ is given by (see e.g. \cite[Chapter 9]{cover2012elements}),
\begin{equation}
\mathbb {\mathbb C}_Z(P) = \sup_{X \colon \EE[|X|^2] \leq P} I(X;X+Z), \label{addcap}
\end{equation}
where $P$ is the power allotted for the transmission. 
As a consequence of the entropy power inequality \eqref{EPI}, it holds that
\begin{equation}\label{saddle-point}
{\mathbb C}_Z(P) \geq \ushort {\mathbb C}_Z(P) = \frac{1}{2} \log \left( 1 + \frac{P}{\Var{Z}} \right),
\end{equation}
for arbitrary noise $Z$, where $\ushort {\mathbb C}_Z(P)$ denotes the capacity of the additive white Gaussian noise channel with noise variance $\Var{Z}$. This fact is well known (see e.g. \cite[Chapter 9]{cover2012elements}), and is referred to as the saddle-point condition.

In this paper, we show that whenever the noise $Z$ is log-concave, the difference between the capacity ${\mathbb C}_Z(P)$ and the capacity of a Gaussian channel with the same noise power satisfies
\begin{equation}\label{cap-log}
{\mathbb C}_Z(P) - \ushort {\mathbb C}_Z(P) \leq \log\sqrt{\frac{\pi e}{2}} \approx 1.05 \mbox{ bits}.
\end{equation}

Let us mention a similar result of Zamir and Erez \cite{ZE} who showed that the capacity of an arbitrary memoryless  additive noise  channel is well approximated by the mutual information between the Gaussian input and the output of the channel:
\begin{equation}\label{not2bad}
{\mathbb C}_Z(P) - I(X^*;X^*+Z) \leq \frac{1}{2} \mbox{ bits},
\end{equation}
where $X^*$ is a Gaussian input satisfying the power constraint. The bounds \eqref{cap-log} and \eqref{not2bad} are not directly comparable.

The new bounds on rate-distortion function and the channel capacity can be juxtaposed to yield a converse for joint source-channel coding that is universal in the source and the noise distributions, as long as both are log-concave (see \secref{sec:jscc} below).

The rest of the paper is organized as follows. \secref{sec:main} presents and discusses our main results: the lower bounds on differential entropy in Theorems \ref{variable}, \ref{variable-gen} and \ref{vector}, the reverse entropy power inequalities with explicit constants in Theorems \ref{rev-EPI-gen} and \ref{reverse-vector}, the upper bounds on $\Rd - \Rslb$ in Theorems \ref{rate-gen}, \ref{rate} and \ref{rate-vector}, and the bounds on the capacity of memoryless additive channels in Theorems \ref{cap-variable} and \ref{cap-vector}. The convex geometry tools to prove the bounds on differential entropy and the proofs of Theorems \ref{variable}, \ref{variable-gen} and \ref{vector} are presented in \secref{sec:entropy}. In \secref{convex-measures}, we extend our results to the class of $\gamma$-concave random variables. The reverse entropy power inequalities in Theorems \ref{rev-EPI-gen} and \ref{reverse-vector} are proven in \secref{reverse-EPI}. The bounds on the rate-distortion function in Theorems \ref{rate-gen}, \ref{rate} and \ref{rate-vector} are proven in \secref{sec:rd}. The bounds on the channel capacity in Theorems \ref{cap-variable} and \ref{cap-vector} are proven in \secref{channel-capacity}.

\section{Main results}
\label{sec:main}

\subsection{Lower bounds on the differential entropy}
\label{subsec:bound}

A function $f : \R^n \to [0, + \infty)$ is \emph{log-concave} if $\log f : \R^n \to [-\infty, \infty)$ is a concave function. Equivalently, $f$ is log-concave if for every $\lambda \in [0,1]$ and for every $x,y \in \R^n$, one has
\begin{eqnarray}\label{def-log}
f((1-\lambda)x + \lambda y) \geq f(x)^{1-\lambda} f(y)^{\lambda}.
\end{eqnarray}

We say that a random vector $X$ in $\R^n$ is log-concave if it has a probability density function $f_X$ with respect to Lebesgue measure in $\R^n$ such that $f_X$ is log-concave.

Our first result is a lower bound on the differential entropy of symmetric log-concave random variable in terms of its moments. 

\begin{thm}\label{variable}
Let $X$ be a symmetric log-concave random variable. Then, for every $p > -1$,
\begin{eqnarray}
h(X) \geq \log \frac{2 \|X\|_p }{\Gamma(p+1)^{\frac 1 p}}. \label{eq:varthm}
\end{eqnarray}
Moreover, \eqref{eq:varthm} holds with equality for uniform distribution in the limit $p \to -1$.
\end{thm}

As we will see in Theorem \ref{variable-gen}, for $p=2$, the bound \eqref{eq:varthm} tightens as
\begin{eqnarray}\label{p=2}
h(X) \geq \log(2\|X\|_2).
\end{eqnarray}

The difference between the upper bound in \eqref{hp} and the lower bound in \eqref{eq:varthm} grows as $\log(p)$ as $p \to + \infty$, as $\frac{1}{\sqrt{p}}$ as $p \to 0^+$, and reaches its minimum value of $\log(e) \approx 1.4$ bits at $p=1$.

The next theorem, due to Karlin, Proschan and Barlow \cite{KPB}, shows that the moments of a symmetric log-concave random variable are comparable, and demonstrates that the bound in \thmref{variable} tightens as $p \downarrow -1$. 

\begin{thm}[\!\!\cite{KPB}]\label{2}

Let $X$ be a symmetric log-concave random variable. Then, for every $-1 < p \leq q$,
\begin{eqnarray}
\frac{\|X\|_q}{\Gamma(q+1)^{\frac{1}{q}}} \leq \frac{\|X\|_p}{\Gamma(p+1)^{\frac{1}{p}}}. \label{eq:xqxp}
\end{eqnarray}
Moreover, the Laplace distribution satisfies \eqref{eq:xqxp} with equality.

\end{thm}

Combining Theorem \ref{2} with the well known fact that $\|X\|_p$ is non-decreasing in $p$, we deduce that for every symmetric log-concave random variable $X$, for every $-1<p<q$,
\begin{eqnarray}\label{compare-moment}
\|X\|_p \leq \|X\|_q \leq \frac{\Gamma(q+1)^{\frac{1}{q}}}{\Gamma(p+1)^{\frac{1}{p}}} \|X\|_p.
\end{eqnarray}

Using \thmref{variable} and \eqref{p=2}, we immediately obtain the following upper bound for the relative entropy $D(X||G_X)$ between a symmetric log-concave random variable $X$ and a Gaussian $G_X$ with same variance as that of $X$.

\begin{cor}\label{relative-entr}

Let $X$ be a symmetric log-concave random variable. Then, for every $p>-1$,
\begin{eqnarray}\label{relative-entr-eq}
D(X || G_X) \leq \log\sqrt{\pi e} + \Delta_p,
\end{eqnarray}
where $G_X \sim \mathcal{N}(0, \|X\|_2^2)$, and
\begin{equation}\label{delta-p}
\Delta_p \triangleq
\left\{
\begin{array}{ll}
\log \left( \frac{\Gamma(p+1)^{\frac{1}{p}}}{\sqrt{2}} \frac{\|X\|_2}{\|X\|_p} \right) & \mbox{ $p \neq 2$}\\
- \log \sqrt{2} & \mbox{ $p=2$}
\end{array}
\right..
\end{equation}

\end{cor}

\begin{remark}\label{rem}
 The uniform distribution achieves equality in \eqref{relative-entr-eq} in the limit $p \downarrow -1$. Indeed, if $U$ is uniformly distributed on a symmetric interval, then
\begin{equation}\label{rem-eq}
\Delta_p= \log \frac{\Gamma(p+2)^{\frac{1}{p}}}{\sqrt{6}} \xrightarrow[p \to -1]{} \frac{1}{2} \log \frac{1}{6},
\end{equation}
and so in the limit $p \downarrow -1$ the upper bound in \corref{relative-entr} coincides with the true value of $D(U||G_U)$:
\begin{eqnarray}
D(U||G_U) \leq \frac{1}{2} \log \frac{2 \pi e}{12}  = D(U||G_U).
\end{eqnarray}
Note that $\frac 1 {2\pi e}$ is the normalized second moment (that is, the second moment per dimension of a uniform distribution divided by the volume raised to the power of $\frac 2 n$) of an $n$-dimensional ball, in the limit of large $n$, and $\frac 1 {12}$ is the normalized second moment of a hypercube.
\end{remark}

We next provide a lower bound for the differential entropy of log-concave random variables that are not necessarily symmetric.

\begin{thm}\label{variable-gen}

Let $X$ be a log-concave random variable. Then, for every $p \geq 1$,
\begin{eqnarray}\label{eq:varthm-gen}
h(X) \geq \log \frac{2 \|X - \EE[X]\|_p }{\Gamma(p+1)^{\frac 1 p}}.
\end{eqnarray}
Moreover, for $p=2$, the bound \eqref{eq:varthm-gen} tightens as
\begin{equation}\label{p=2-gen}
h(X) \geq \log(2 \sqrt{\Var{X}}).
\end{equation}

\end{thm}

The next proposition is an analog of Theorem \ref{2} for log-concave random variables that are not necessarily symmetric.

\begin{prop}\label{2-gen}

Let $X$ be a log-concave random variable. Then, for every $1 \leq p \leq q$,
\begin{equation}\label{rev-holder-gen}
\frac{\|X - \EE[X]\|_q}{\Gamma(q+1)^{\frac{1}{q}}} \leq 2 \frac{\|X - \EE[X]\|_p}{\Gamma(p+1)^{\frac{1}{p}}}.
\end{equation}

\end{prop}

\begin{remark}

Contrary to Theorem \ref{2}, we do not know whether there exists a distribution that realizes equality in \eqref{rev-holder-gen}.

\end{remark}

Using \thmref{variable-gen}, we immediately obtain the following upper bound for the relative entropy $D(X||G_X)$ between an arbitrary log-concave random variable $X$ and a Gaussian $G_X$ with same variance as that of $X$. Recall the definition of $\Delta_p$ in \eqref{delta-p}.

\begin{cor}\label{relative-entr-gen}

Let $X$ be a zero-mean, log-concave random variable. Then, for every $p \geq 1$,
\begin{eqnarray}\label{relative-entr-eq-gen}
D(X || G_X) \leq \log\sqrt{\pi e} + \Delta_p,
\end{eqnarray}
where $G_X \sim \mathcal{N}(0, \|X\|_2^2)$. In particular, by taking $p=2$, we necessarily have
\begin{eqnarray}\label{relative-entr-gen-2}
D(X || G_X) \leq \log\sqrt{\frac{\pi e}{2}}.
\end{eqnarray}

\end{cor}

For a given distribution of $X$, one can optimize over $p$ to further tighten \eqref{relative-entr-gen-2}, as seen in \eqref{rem-eq} for the uniform distribution.

We now present a generalization of the bound in \thmref{variable} to random vectors satisfying a symmetry condition. A function $f \colon \R^n \to \R$ is called \emph{unconditional} if for every $(x_1, \dots, x_n) \in \R^n$ and every $(\eps_1, \dots, \eps_n) \in \{-1,1\}^n$, one has
\begin{eqnarray}
f(\eps_1 x_1, \dots, \eps_n x_n) = f(x_1, \dots, x_n).
\end{eqnarray}
For example, the probability density function of the standard Gaussian distribution is unconditional. We say that a random vector $X$ in $\R^n$ is unconditional if it has a probability density function $f_X$ with respect to Lebesgue measure in $\R^n$ such that $f_X$ is unconditional.

\begin{thm}\label{vector}

Let $X$ be a symmetric log-concave random vector in $\R^n$, $n \geq 2$. Then,
\begin{eqnarray}\label{eq:vec}
h(X) \geq \frac{n}{2} \log \frac{|K_X|^{\frac{1}{n}}}{c(n)},
\end{eqnarray}
where $|K_X|$ denotes the determinant of the covariance matrix of $X$, and
\begin{eqnarray}\label{iso-1}
c(n) = \frac{e^2 n^2}{4 \sqrt{2} (n+2)}.
\end{eqnarray}
If, in addition, $X$ is unconditional, then
\begin{eqnarray}\label{iso-2}
c(n) = \frac{e^2}{2}.
\end{eqnarray}

\end{thm}

\begin{remark}

The constant \eqref{iso-2} is better than the constant \eqref{iso-1} if $n \geq 5$.

\end{remark}

By combining Theorem \ref{vector} with the well known upper bound on the differential entropy, we deduce that for every symmetric log-concave random vector $X$ in $\R^n$,
\begin{equation}\label{compare-entr-vect}
\frac{n}{2} \log \left( \frac{|K_X|^{\frac{1}{n}}}{c(n)} \right) \leq h(X) \leq \frac{n}{2} \log \left( 2 \pi e |K_X|^{\frac{1}{n}} \right),
\end{equation}
where $c(n) = \frac{e^2 n^2}{4 \sqrt{2} (n+2)}$ in general, and $c(n) = \frac{e^2}{2}$ if, in addition, $X$ is unconditional.

Using \thmref{vector}, we immediately obtain the following upper bound for the relative entropy $D(X||G_X)$ between a symmetric log-concave random vector $X$ and a Gaussian $G_X$ with same covariance matrix as that of $X$.

\begin{cor}\label{relative-entr-vector}

Let $X$ be a symmetric log-concave random vector in $\R^n$. Then, 
\begin{eqnarray}\label{relative-entr-eq-vector}
D(X || G_X) \leq \frac{n}{2} \log(2 \pi e c(n)),
\end{eqnarray}
where $G_X \sim \mathcal{N}(0, K_X)$, with $c(n) = \frac{n^2 e^2}{(n+2)4 \sqrt{2}}$ in general, and $c(n) = \frac{e^2}{2}$ when $X$ is unconditional.

\end{cor}

\subsection{Extension to $\gamma$-concave random variables}

The bound in \thmref{variable} can be extended to a larger class of random variables than log-concave, namely the class of $\gamma$-concave random variables that we describe next.

Let $\gamma < 0$. We say that a probability density function $f : \R^n \to [0, +\infty)$ is \emph{$\gamma$-concave} if $f^{\gamma}$ is convex. Equivalently, $f$ is $\gamma$-concave if for every $\lambda \in [0,1]$ and every $x,y \in \R^n$, one has
\begin{equation}\label{def-gamma}
f((1-\lambda) x + \lambda y) \geq ((1-\lambda)f(x)^{\gamma} + \lambda f(y)^{\gamma})^{\frac{1}{\gamma}}.
\end{equation}
As $\gamma \to 0$, \eqref{def-gamma} agrees with \eqref{def-log}, and thus $0$-concave distributions corresponds to log-concave distributions. The class of $\gamma$-concave distributions has been deeply studied in \cite{Borell2}, \cite{Borell3}.

Since for fixed $a,b \geq 0$ the function $((1-\lambda)a^{\gamma} + \lambda b^{\gamma})^{\frac{1}{\gamma}}$ is non-decreasing in $\gamma$, we deduce that any log-concave distribution is $\gamma$-concave, for any $\gamma < 0$.

For example, extended Cauchy distributions, that is, distributions of the form
\begin{equation}
f_X(x) = \frac{C_{\gamma}}{1+|x|^{n - \frac{1}{\gamma}}}, \quad x \in \R^n,
\end{equation}
where $C_{\gamma}$ is the normalization constant, are $\gamma$-concave distributions.

We say that a random vector $X$ in $\R^n$ is $\gamma$-concave if it has a probability density function $f_X$ with respect to Lebesgue measure in $\R^n$ such that $f_X$ is $\gamma$-concave.

We derive the following lower bound on the differential entropy for 1-dimensional symmetric $\gamma$-concave random variables, with $\gamma \in (-1,0)$.

\begin{thm}\label{convex-variable}

Let $\gamma \in (-1,0)$. Let $X$ be a symmetric $\gamma$-concave random variable. Then, for every $p \in (-1, -1 - \frac{1}{\gamma})$,
\begin{equation}\label{convex-eq}
h(X) \geq \log \left( \frac{2 \|X\|_p}{\Gamma(p+1)^\frac{1}{p}} \frac{\Gamma(-1 -\frac{1}{\gamma})^{1+\frac{1}{p}}}{\Gamma(-\frac{1}{\gamma}) \Gamma(-\frac{1}{\gamma} - (p+1))^\frac{1}{p}} \right).
\end{equation}

\end{thm}

Notice that \eqref{convex-eq} reduces to \eqref{eq:varthm} as $\gamma \to 0$.

\thmref{convex-variable} implies the following relation between entropy and second moment, for any $\gamma \in (-\frac{1}{3},0)$.

\begin{cor}\label{convex-variable-2}

Let $\gamma \in (-\frac{1}{3},0)$. Let $X$ be a symmetric $\gamma$-concave random variable. Then,
\begin{equation}\label{convex-eq-2}
h(X) \geq \frac{1}{2} \log \left( 2 \|X\|_2^2 \frac{\Gamma(-1 -\frac{1}{\gamma})^{3}}{\Gamma(-\frac{1}{\gamma})^2 \Gamma(-\frac{1}{\gamma} - 3)} \right) = \frac{1}{2} \log \left( 2 \|X\|_2^2 \frac{(2\gamma + 1) (3\gamma + 1)}{(\gamma + 1)^2} \right).
\end{equation}

\end{cor}

\subsection{Reverse entropy power inequality with an explicit constant}

As an application of Theorems \ref{variable-gen} and \ref{vector}, we establish in Theorems \ref{rev-EPI-gen} and \ref{reverse-vector} below a reverse form of the entropy power inequality \eqref{EPI} with explicit constants, for uncorrelated log-concave random vectors. Recall the definition of the entropy power \eqref{entropy-power}.

\begin{thm}\label{rev-EPI-gen}

Let $X$ and $Y$ be uncorrelated log-concave random variables. Then,
\begin{equation}
N(X+Y) \leq \frac{\pi e}{2} (N(X)+N(Y)).
\end{equation} 

\end{thm}

One cannot have a reverse entropy power inequality in higher dimension for arbitrary log-concave random vectors. Indeed, just consider $X$ uniformly distributed on $[-\frac{\eps}{2}, \frac{\eps}{2}] \times [-\frac{1}{2}, \frac{1}{2}]$ and $Y$ uniformly distributed on $[-\frac{1}{2}, \frac{1}{2}] \times [-\frac{\eps}{2}, \frac{\eps}{2}]$ in $\R^2$, with $\eps > 0$ small enough so that $N(X)$ and $N(Y)$ are arbitrarily small compared to $N(X+Y)$. Hence, we need to put $X$ and $Y$ in a certain position so that a reverse form of \eqref{EPI} is possible. While the isotropic position (discussed in Section \ref{sec:entropy}) will work, it can be relaxed to the weaker condition that the covariance matrices are proportionals. Recall that we denote by $K_X$ the covariance matrix of $X$.

\begin{thm}\label{reverse-vector}

Let $X$ and $Y$ be uncorrelated symmetric log-concave random vectors in $\R^n$ such that $K_X$ and $K_Y$ are proportionals. Then,
\begin{equation}\label{eq:rev-EPI-vec}
N(X+Y) \leq \frac{\pi e^3 n^2}{2 \sqrt{2}(n+2)} (N(X)+N(Y)).
\end{equation}
If, in addition, $X$ and $Y$ are unconditional, then
\begin{equation}
N(X+Y) \leq \pi e^3 (N(X)+N(Y)).
\end{equation}

\end{thm}

\subsection{New bounds on the rate-distortion function}

As an application of Theorems \ref{variable} and \ref{variable-gen}, we show in Corollary \ref{cons-gen} and \ref{cons} below that in the class of 1-dimensional log-concave distributions, the rate-distortion function does not exceed the Shannon lower bound by more than $\log(\sqrt{\pi e}) \approx 1.55$ bits (which can be refined to $\log(e) \approx 1.44$ bits when the source is symmetric), independently of $d$ and $r \geq 1$. Denote for brevity
\begin{align}
\beta_r &\triangleq\sqrt{ 1 + \frac{r^{\frac{2}{r}} \Gamma(\frac{3}{r})}{\Gamma(\frac{1}{r})} } \label{betar},
\end{align}
and recall the definition of $\alpha_r$ in \eqref{alphar}.

We start by giving a bound on the difference between the rate-distortion function and the Shannon lower bound, which applies to general, not necessarily log-concave, random variables. 
\begin{thm}\label{rate-gen} 
Let $d \geq 0$ and $r \geq 1$. Let $X$ be an arbitrary random variable. 

$\mathrm{1)}$ Let $r \in [1,2]$. If $\|X\|_2 > d^{\frac{1}{r}}$, then
\begin{eqnarray}
\Rd - \Rslb \leq D(X||G_X) + \log  \frac{\alpha_r}{\sqrt{2 \pi e} } . \label{eq:ub}
\end{eqnarray}
If $\|X\|_2 \leq d^{\frac{1}{r}}$, then $\Rd = 0$. \\

$\mathrm{2)}$ Let $r > 2$. If $\|X\|_2 \geq d^{\frac{1}{r}}$, then
\begin{eqnarray}
\Rd - \Rslb \leq D(X||G_X) + \log \beta_r. \label{eq:ub2}
\end{eqnarray}
If $\|X\|_r \leq d^{\frac 1 r}$, then $\Rd = 0$. \\
If $\|X\|_r > d^{\frac 1 r}$ and $\|X\|_2 < d^{\frac 1 r}$, then
\begin{eqnarray*}
\Rd \leq \log \frac{\sqrt{2 \pi e}\beta_r}{\alpha_r}.
\end{eqnarray*}

\end{thm}

\begin{remark}\label{rem-rate}

For Gaussian $X$ and $r = 2$, the upper bound in \eqref{eq:ub} is $0$, as expected.

\end{remark}

To bound $\Rd - \Rslb$ independently of the distribution of $X$, we apply the bound \eqref{relative-entr-gen-2} on $D(X||G_X)$ to Theorem \ref{rate-gen}:

\begin{cor}\label{cons-gen}

Let $X$ be a log-concave random variable. For $r \in [1,2]$, we have
\begin{eqnarray}
\Rd - \Rslb \leq \log \frac{\alpha_r}{2}.
\end{eqnarray}
For $r>2$, we have
\begin{align}
\Rd - \Rslb \leq \log \left( \sqrt{\frac{\pi e}{2}} \beta_r \right). 
\end{align}

\end{cor}

Please refer to \figref{fig:slb-gen} in \secref{sec:intro} for a numerical evaluation of the bounds of Corollary~\ref{cons-gen}.

The next result refines the bounds in \thmref{rate-gen} for symmetric log-concave random variables when $r>2$.

\begin{thm}\label{rate} 
Let $d \geq 0$ and $r > 2$. Let $X$ be a symmetric log-concave random variable. 

If $\|X\|_2 \geq d^{\frac{1}{r}}$, then
\begin{eqnarray}
\Rd - \Rslb \leq D(X||G_X) + \min \left\{ \log(\beta_r), \log \frac{\alpha_r \Gamma(r+1)^{\frac 1 r}}{2 \sqrt{\pi e} } \right\}. \label{eq:ub3}
%\notag
\end{eqnarray}
If $\|X\|_r \leq d^{\frac 1 r}$ or $\|X\|_2 \leq \frac{\sqrt 2}{\Gamma(r+1)^{\frac{1}{r}}}d^{\frac{1}{r}}$, then $\Rd = 0$. \\
If $\|X\|_r > d^{\frac 1 r}$ and $\|X\|_2 \in \left(\frac{\sqrt 2}{\Gamma(r+1)^{\frac{1}{r}}}d^{\frac{1}{r}}, d^{\frac{1}{r}}\right)$, then
\begin{equation*}
\Rd \leq \min \left\{ \log  \frac{ \sqrt{2 \pi e}  \beta_r}{ \alpha_r}, \log \frac{\Gamma(r+1)^{\frac 1 r}}{\sqrt 2} \right\}.
\end{equation*}

\end{thm}

\begin{remark}

The bound \eqref{eq:ub3} tightens \eqref{eq:ub2} whenever $r$ is close to $2$.

\end{remark}

By applying the bound \eqref{relative-entr-gen-2} on $D(X||G_X)$ to Theorems \ref{rate-gen} and \ref{rate}, we obtain the following refinement of Corollary \ref{cons-gen} for symmetric log-concave random variables:

\begin{cor}\label{cons}

Let $X$ be a symmetric log-concave random variable. For $r \in [1,2]$, we have
\begin{eqnarray}
\Rd - \Rslb \leq \log \frac{\alpha_r}{2}.
\end{eqnarray}
For $r>2$, we have
\begin{align}
\Rd - \Rslb \leq \min \left\{ \log \frac{\alpha_r \Gamma(r+1)^{\frac 1 r} }{2\sqrt{2}},  ~ \log \left( \sqrt{\frac{\pi e}{2}} \beta_r \right) \right\}. 
\end{align}

\end{cor}

Please refer to \figref{fig:slb} in \secref{sec:intro} for a numerical evaluation of the bounds of Corollary~\ref{cons}. One can see that the graph in \figref{fig:slb} is continuous at $r=2$, contrary to the graph in \figref{fig:slb-gen}. This is because \thmref{2}, which applies to symmetric log-concave random variables, is strong enough to imply the tightening of \eqref{eq:ub2} given in \eqref{eq:ub3}, while \propref{2-gen}, which provides a counterpart of \thmref{2} applicable to all log-concave random variables, is insufficient to derive a similar tightening in that more general setting.

While \corref{cons} bounds the difference $\Rd - \Rslb$ by a universal constant independent of the distribution of $X$, tighter bounds can be obtained  if one is willing to relinquish such universality.
%Namely, applying Corollary~\ref{relative-entr} with $p<2$ to Theorems \ref{rate-gen} and \ref{rate}, one  obtains a provably tighter bound on $\Rd - \Rslb$ in terms of  the second and the $p$-th moment of $X$.
For example, for mean-square distortion ($r = 2$) and a uniformly distributed source $U$, using Remark \ref{rem}, we obtain
\begin{align}
\mathbb R_{U}(d) - \ushort{\mathbb R}_U(d) \leq \frac{1}{2} \log \frac{2 \pi e}{12} \approx 0.254 \mbox{ bits}.
\end{align}

We now turn to the difference between the rate distortion function and the Shannon lower bound for log-concave random vectors in $\R^n$, $n \geq 1$.

For random vector $X$, denote
\begin{equation}
\|X\|_p \triangleq \left(\E{ \sum_{i=1}^n |X_i|^p} \right)^{\frac{1}{p}}.
\end{equation}
The standard rate-distortion function $\Rd$ under $r$-th moment distortion measure is defined as
\begin{eqnarray}
\Rd = \inf_{\substack{ P_{\hat X | {X}} \colon\\ \frac{1}{n}  \|X-\hat{X}\|_r^r \leq d }} I(X;\hat{X}), \label{eq:Rd-vector}
\end{eqnarray}
where the infimum is over all transition probability kernels $\mathbb R^n \mapsto \mathbb R^n$ satisfying the moment constraint. In this context, the Shannon lower bound is
\begin{eqnarray}
\Rd \geq \Rslb \triangleq h(X) - n \log(\alpha_r d^{\frac{1}{r}}),
\end{eqnarray}
where $\alpha_r$ is defined in \eqref{alphar}. The next result, which extends \thmref{rate-gen} to random vectors, not necessarily log-concave, is expressed in terms of the constant $\alpha_r$  and the constant $\beta_r$ defined in \eqref{betar}.

\begin{thm}\label{rate-vector} 

Let $d \geq 0$ and $r \geq 1$. Let $X$ be a random vector in $\R^n$.

$\mathrm{1)}$ let $r \in [1,2]$. If $\|X\|_2 > \sqrt{n} d^{\frac{1}{r}}$, then
\begin{equation}\label{vector:eq1}
\Rd - \Rslb \leq D(X||G_X) + n \log \frac{\alpha_r}{\sqrt{2 \pi e}}  + \frac{n}{2} \log  \frac{\frac{1}{n} \|X\|_2^2}{|K_X|^{\frac{1}{n}}}.
\end{equation}
If $\|X\|_2 \leq \sqrt{n} d^{\frac{1}{r}}$, then $\Rd = 0$. \\

$\mathrm{2)}$ Let $r > 2$. If $\|X\|_2 \geq \sqrt{n} d^{\frac{1}{r}}$, then
\begin{equation}\label{vector:eq2}
\Rd - \Rslb \leq D(X||G_X) + n \log(\beta_r) + \frac{n}{2} \log  \frac{\frac{1}{n} \|X\|_2^2}{|K_X|^{\frac{1}{n}}} .
\end{equation}
If $\|X\|_r \leq (n d)^{\frac{1}{r}}$, then $\Rd = 0$. \\
If $\|X\|_r > (nd)^{\frac 1 r}$ and $\|X\|_2 < \sqrt{n} d^{\frac{1}{r}}$, then
\begin{eqnarray*}
\Rd \leq n \log \frac{\sqrt{2 \pi e}\beta_r}{\alpha_r}.
\end{eqnarray*}

\end{thm}

If $X$ is scalar, the bound in \thmref{rate-vector} reduces to that in \thmref{rate-gen}, in which case the last term in \eqref{vector:eq1} and \eqref{vector:eq2} vanishes.

The bound in Theorem \ref{rate-vector} can be arbitrarily large because of the term $\frac{1}{n} \|X\|^2_2 / |K_X|^{\frac{1}{n}}$. However, for isotropic random vectors (whose definition we recall in Section~\ref{sec:entropy} below), one has $\frac{1}{n} \|X\|^2_2 = |K_X|^{\frac{1}{n}}$. Hence, we deduce the following corollary under mean-square error distortion ($r=2$).

\begin{cor}\label{rate-coro}

Let $X$ be an isotropic random vector in $\R^n$. Then,
\begin{equation}
\Rd - \Rslb \leq D(X||G_X).
\end{equation}

\end{cor}

To bound $\Rd - \Rslb$ independently of the distribution of isotropic log-concave random vector $X$, we apply the bound \eqref{relative-entr-eq-vector} on $D(X||G_X)$ to Corollary \ref{rate-coro}:

\begin{cor}

Let $X$ be an isotropic log-concave random vector in $\R^n$. Then,
\begin{equation}\label{eq:rate-vec-iso}
\Rd - \Rslb \leq \frac{n}{2} \log(2 \pi e \, c(n)),
\end{equation}
where $c(n) = \frac{n^2 e^2}{(n+2)4 \sqrt{2}}$ in general, and $c(n) = \frac{e^2}{2}$ if, in addition, $X$ is unconditional.

\end{cor}

\subsection{New bounds on the capacity of memoryless additive channels}

As another application of Theorem \ref{variable-gen}, we compare the capacity ${\mathbb C}_Z$ of a channel with log-concave additive noise $Z$ with the capacity of the Gaussian channel. Recall that the capacity of the Gaussian channel is
\begin{equation}
\ushort {\mathbb C}_Z(P) = \frac{1}{2} \log \left( 1 + \frac{P}{\Var{Z}} \right).
\end{equation}

\begin{thm}\label{cap-variable}

Let $Z$ be a log-concave random variable. Then,
\begin{equation}\label{cap-variable-eq}
0 \leq {\mathbb C}_Z(P) - \ushort {\mathbb C}_Z(P) \leq  \log\sqrt{\frac{\pi e}{2}}.
\end{equation}

\end{thm}

\begin{remark}

Theorem \ref{cap-variable} tells us that the capacity of a channel with log-concave additive noise exceeds the capacity a Gaussian channel by no more than
\begin{equation}
\log\sqrt{\frac{\pi e}{2}} \approx 1.05 \mathrm{\,\,bits}.
\end{equation}

\end{remark}

As an application of Theorem \ref{vector}, we can provide bounds for the capacity of a channel with log-concave additive noise $Z$ in $\R^n$, $n \geq 1$. The formula for capacity \eqref{addcap} generalizes to dimension $n$ as
\begin{equation}
{\mathbb C}_Z(P) = \sup_{X \colon \frac{1}{n} \|X\|_2^2 \leq P} I(X;X+Z).
\end{equation}

\begin{thm}\label{cap-vector}

Let $Z$ be a symmetric log-concave random vector in $\R^n$. Then,
\begin{equation}
0 \leq {\mathbb C}_Z(P) - \frac{n}{2} \log \left( 1 + \frac{P}{|K_Z|^{\frac{1}{n}}} \right) \leq \frac{n}{2} \log \left( 2 \pi e ~ c(n) \left( \frac{\frac{1}{n} \|Z\|_2^2 + P}{|K_Z|^{\frac{1}{n}} + P} \right) \right),
\end{equation}
where $c(n) = \frac{n^2 e^2}{(n+2)4 \sqrt{2}}$. If, in addition, $Z$ is unconditional, then $c(n) = \frac{e^2}{2}$.

\end{thm}

Similarly to Theorem \ref{rate-vector}, the upper bound in Theorem \ref{cap-vector} can be arbitrarily large by inflating the ratio $\frac{1}{n} \|X\|^2_2 / |K_X|^{\frac{1}{n}}$. For isotropic random vectors (whose definition is recalled in Section \ref{sec:entropy} below), one has $\frac{1}{n} \|Z\|_2^2 = |K_Z|^{\frac{1}{n}}$, and the following corollary follows.

\begin{cor}

Let $Z$ be an isotropic log-concave random vector in $\R^n$. Then,
\begin{equation}\label{eq:cap-vec-iso}
0 \leq {\mathbb C}_Z(P) - \frac{n}{2} \log \left( 1 + \frac{P}{|K_Z|^{\frac{1}{n}}} \right) \leq \frac{n}{2} \log \left( 2 \pi e \, c(n) \right),
\end{equation}
where $c(n) = \frac{n^2 e^2}{(n+2)4 \sqrt{2}}$. If, in addition, $Z$ is unconditional, then $c(n) = \frac{e^2}{2}$.

\end{cor}

\subsection{An explicit converse for joint source-channel coding}
\label{sec:jscc}

In this subsection we illustrate an application of our bounds to joint source-channel coding. 
Consider the general information-theoretic model consisting of a source $S$, whose encoded version, $X$, is transmitted through an additive noise  channel. The decoder observes the output of the channel and forms $\hat S$, its estimate of $S$.

From the data processing inequality (see e.g. \cite{cover2012elements}), we have
\begin{equation}
I(S;\hat{S}) \leq I(X;X+Z).
\end{equation}
Taking the infimum of the left side over all transition probability kernels $P_{\hat{S}|S}$ such that $\EE[|S-\hat{S}|^2] \leq d$, and taking the supremum of the right side over all input $X$ such that $\EE[|X|^2] \leq P$, we obtain the code-independent converse 
\begin{equation}\label{data-1}
\mathbb R_S(d) \leq {\mathbb C}_Z(P).
\end{equation}
Assuming that both $S$ and $Z$ are log-concave, we can further weaken \eqref{data-1} to obtain a universal converse independent of the distributions of $S$ and $Z$. 
From the Shannon lower bound in \eqref{eq:slb2} and Theorem \ref{variable}, we have
\begin{equation}\label{data-2}
\mathbb R_S(d)  \geq \frac{1}{2} \log  \frac{\sigma^2}{d}  - \frac{1}{2} \log \left( \frac{\pi e}{2} \right),
\end{equation}
where $\sigma^2 \triangleq \Var{S}$, 
and from Theorem \ref{cap-variable}, we have
\begin{equation}\label{data-3}
{\mathbb C}_Z(P) \leq \frac{1}{2} \log \left( 1 + \mathrm{snr} \right) + \frac{1}{2} \log \left( \frac{\pi e}{2} \right),
\end{equation}
where we denoted $\mathrm{snr}  \triangleq \frac{P}{\Var{Z}}$.
By combining \eqref{data-1}, \eqref{data-2} and \eqref{data-3}, we deduce that
\begin{equation}
\frac{1}{2} \log  \frac{\sigma^2}{d}  - \frac{1}{2} \log \left( \frac{\pi e}{2} \right) \leq \frac{1}{2} \log \left( 1 + \mathrm{snr} \right) + \frac{1}{2} \log \left( \frac{\pi e}{2} \right).
\end{equation}
Finally, we arrive to
\begin{equation}\label{converse}
d \geq \left(\frac{2}{\pi e} \right)^2 \frac{\sigma^2}{1 + \mathrm{snr} }.
\end{equation}

Notice that the converse bound \eqref{converse} has the same form as the one for the transmission of a Gaussian source over an additive Gaussian channel, which is known to be
\begin{equation}
d \geq \frac{\sigma^2}{1 + \mathrm{snr}}.
\end{equation}

\section{New lower bounds on the differential entropy}
\label{sec:entropy}

In this section, we present the main tools of the paper and prove Theorems \ref{variable}, \ref{variable-gen} and \ref{vector}. The key to our development is the following result for 1-dimensional log-concave distributions, well known in convex geometry. It can be found in \cite{Borell}, in a slightly different form.

\begin{lemma}[{\!\!\cite{Borell}}]
\label{key}

The function
\begin{eqnarray}
F(r) = \frac{1}{\Gamma(r+1)} \int_0^{+ \infty} x^r f(x) \, \de x
\end{eqnarray}
is log-concave on $[-1, +\infty)$, whenever $f \colon [0;+\infty) \to [0;+\infty)$ is log-concave.

\end{lemma}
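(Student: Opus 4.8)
The plan is to recognize $F(r)$ as a one-dimensional marginal-type integral and invoke the Pr\'ekopa--Leindler inequality (equivalently, Pr\'ekopa's theorem on log-concavity of marginals), after absorbing the $\Gamma$-factor into the integrand via the standard identity $\frac{1}{\Gamma(r+1)} = \frac{1}{\Gamma(r+1)}$ written as an integral over an auxiliary variable. Concretely, I would start from the Gamma integral $\Gamma(r+1) = \int_0^\infty t^r e^{-t}\,\de t$ and instead rewrite the whole expression so that $F$ becomes an integral of a jointly log-concave function in $(r, \text{extra variables})$. The cleanest route: substitute $x = e^{s}$ in the inner integral, so that for $s \in \R$,
\begin{equation}
\int_0^{+\infty} x^r f(x)\,\de x = \int_{-\infty}^{+\infty} e^{(r+1)s} f(e^s)\,\de s.
\end{equation}
Here $g(s) \triangleq f(e^s)$ is log-concave in $s$ only if $f$ is log-concave \emph{and} log-concave in $\log x$; that is not automatic, so this substitution alone is not enough and I would not rely on it. Instead I would keep $x$ linear and handle the $\Gamma$ normalization separately.

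The key step is therefore to introduce an auxiliary exponential variable. Write, for $r > -1$,
\begin{equation}
F(r) = \frac{1}{\Gamma(r+1)}\int_0^{+\infty} x^r f(x)\,\de x,
\end{equation}
and consider the function on $\R_{>0}^2$ given by $(x,y)\mapsto e^{-y} \mathbf 1_{\{0 \le x \le y\}}$-type kernels; more precisely, recall the Beta-function style identity that expresses $\frac{x^r}{\Gamma(r+1)}$ as a limit/integral against a log-concave kernel in $(x, r)$. The honest and robust approach: fix $r_0, r_1 > -1$ and $\lambda \in [0,1]$, set $r_\lambda = (1-\lambda)r_0 + \lambda r_1$, and prove $F(r_\lambda) \ge F(r_0)^{1-\lambda} F(r_1)^\lambda$ directly by Pr\'ekopa--Leindler applied to the two-variable functions
\begin{equation}
\Phi_i(x) = \frac{x^{r_i}}{\Gamma(r_i+1)} f(x), \qquad i \in \{0,1\},
\end{equation}
after checking the pointwise inequality $\Phi_0(x_0)^{1-\lambda}\Phi_1(x_1)^{\lambda} \le \Phi_\lambda\big((1-\lambda)x_0 + \lambda x_1\big)$, where $\Phi_\lambda(x) = \frac{x^{r_\lambda}}{\Gamma(r_\lambda+1)} f(x)$. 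The factor involving $f$ is controlled by log-concavity of $f$; the factor $x^{r_0(1-\lambda)}x^{r_1\lambda}$ versus $\big((1-\lambda)x_0+\lambda x_1\big)^{r_\lambda}$ is controlled by the weighted AM--GM inequality $x_0^{1-\lambda}x_1^\lambda \le (1-\lambda)x_0 + \lambda x_1$ together with monotonicity of $t \mapsto t^{r_\lambda}$ (valid since $r_\lambda > -1$ means $t^{r_\lambda}$ is... not monotone for $r_\lambda<0$ — so here one instead uses that $x \mapsto x^{r}$ is log-linear, hence the exponents combine exactly: $x_0^{(1-\lambda)r_0} x_1^{\lambda r_1}$ does \emph{not} equal a power of a convex combination, which is precisely why the $\Gamma$ normalization is needed).

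The main obstacle — and the reason the $\Gamma(r+1)$ normalization is not cosmetic — is exactly this mismatch: $x^{r}$ is log-linear in $\log x$ but log-concavity of $F$ is asserted in $r$, and the naive Pr\'ekopa argument on $x^r f(x)$ alone gives log-concavity of $r \mapsto \int x^r f$ as a function on $(-1,\infty)$ only after one knows the right normalizing constant makes the cross terms align. I expect the cleanest rigorous path is the one Borell uses: represent $\frac{x^r}{\Gamma(r+1)} = \int_0^\infty \mathbf 1_{\{u \le x\}} \frac{(x-u)^{r-1}}{\Gamma(r)}\,\de u$-style (fractional integral / repeated integration) so that $F(r)$ becomes a genuine marginal of a function that is jointly log-concave in $(r, x, u)$, whence Pr\'ekopa's theorem applies directly; checking joint log-concavity of that kernel in all variables, especially the behavior as $r \downarrow -1$ and integrability at the endpoints, is the step requiring care. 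I would carry it out in that order: (i) reduce to the claimed two-point inequality; (ii) set up the integral representation making $F$ a marginal; (iii) verify joint log-concavity of the integrand; (iv) apply Pr\'ekopa; (v) handle the boundary $r = -1$ by continuity/monotone limits.
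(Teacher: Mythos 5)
Your proposal does not reach a proof: every route you sketch stalls at the same obstruction, which you correctly sense but never overcome. Pr\'ekopa--Leindler applied in the variable $x$ to $\Phi_i(x)=x^{r_i}f(x)/\Gamma(r_i+1)$ requires the pointwise inequality $\Phi_0(x_0)^{1-\lambda}\Phi_1(x_1)^{\lambda}\le\Phi_\lambda\bigl((1-\lambda)x_0+\lambda x_1\bigr)$ for \emph{all} $x_0,x_1>0$, and this is false: take $f(x)=e^{-x}$, $r_0=0$, $r_1=2$, $\lambda=\tfrac12$ (so $r_\lambda=1$); the $f$-factors match exactly by log-affinity, and the remaining inequality reads $x_1/\sqrt{2}\le (x_0+x_1)/2$, which fails for $x_0$ small relative to $x_1$. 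No normalizing constant can repair a pointwise inequality quantified over all pairs; the $\Gamma$-factor already helps (log-convexity of $\Gamma$) and the inequality still fails. The root cause is that $(r,x)\mapsto x^r=e^{r\log x}$ is a log-saddle, not jointly log-concave, so $F$ is not exhibited as a marginal of a jointly log-concave function by any of your representations --- the fractional-integral kernel $(x-u)^{r-1}/\Gamma(r)$ has exactly the same bilinear defect in $(r,\log(x-u))$. The step you defer as ``requiring care'' (joint log-concavity of the kernel) is precisely the step that cannot be carried out, so the plan is not salvageable along these lines.

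The argument the paper points to is entirely different and elementary: it is the single-crossing comparison spelled out in \eqref{101}--\eqref{105}. One normalizes $f$ and compares it with an exponential $g(x)=ae^{-bx}$ calibrated to match the relevant values of $F$; since $f$ is log-concave and $g$ is log-affine, $f-g$ changes sign a controlled number of times, and integrating $f-g$ against a function of constant sign built from the powers $x^{r}$ yields the two-point inequality $F(r_\lambda)\ge F(r_0)^{1-\lambda}F(r_1)^{\lambda}$. If you want a self-contained proof of Lemma \ref{key}, that crossing argument (Borell's) is the one to carry out; the Pr\'ekopa--Leindler machinery does not apply here.
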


\begin{proof}[Proof of Theorem \ref{variable}]
Let $p>0$. Applying Lemma \ref{key} to the values $-1,0,p$, we have
\begin{align}
F(0) = F \left( \frac{p}{p+1}(-1) + \frac{1}{p+1} p \right) \geq F(-1)^{\frac{p}{p+1}} F(p)^{\frac{1}{p+1}}. \label{1a}
\end{align}
The bound in \thmref{variable} will follow by computing the values $F(-1)$, $F(0)$ and $F(p)$ for 
$f = f_{X}$.

One has
\begin{align}
F(0) &= \frac 1 2, \label{1b}\\
F(p) &= \frac{\|X\|_p^p}{2 \Gamma(p+1)} .  \label{1c}
\end{align}
To compute $F(-1)$, we first provide a different expression for $F(r)$. Notice that
\begin{align}
%&~
 F(r) 
 &=  \frac{1}{\Gamma(r+1)} \int_0^{+\infty} x^r \int_0^{f_{X}(x)} \, \de t \, \de x \\  
 &=  \frac{r+1}{\Gamma(r+2)} \int_0^{\max f_{X}} \int_{\{x \geq 0 \colon f_{X}(x) \geq t\}} x^r \, \de x \, \de t.
\end{align}
Denote the generalized inverse of $f_X$ by
\begin{eqnarray}
f^{-1}_{X}(t) \triangleq \sup\{x \geq 0 \colon f_{X}(x) \geq t \}, \quad t \geq 0.
\end{eqnarray}
Since $f_X$ is log-concave and 
\begin{equation}
 f_X(x) \leq f_X(0) = \max f_X, \label{max0}
\end{equation}
it follows that $f_X$ is non-increasing on $[0, + \infty)$. Therefore,
\begin{equation}
 \{x \geq 0 \colon f_{X}(x) \geq t \} = [0,f_{X}^{-1}(t)].
\end{equation}
Hence,
\begin{align}
F(r) & =  \frac{r+1}{\Gamma(r+2)} \int_0^{ f_{X}(0)} \int_0^{f^{-1}_{X}(t)} x^r \, \de x \, \de t \\ 
& =  \frac{1}{\Gamma(r+2)} \int_0^{f_{X}(0)} (f^{-1}_{X}(t))^{r+1} \, \de t.
\end{align}
We deduce that
\begin{align}
F(-1) = f_{X}(0).  \label{1d}
\end{align}
Plugging \eqref{1b}, \eqref{1c} and \eqref{1d} into \eqref{1a}, we obtain
\begin{eqnarray}\label{moment-max}
f_{X}(0)  \leq  \frac {\Gamma(p+1)^{\frac 1 p}} {2 \|X\|_p} .
\end{eqnarray}
It follows immediately that
\begin{align}
h(X) = \int f_X(x) \log  \frac{1}{f_X(x)} \, \de x \geq \log \frac{1}{f_X(0)} \geq  \log \frac{2 \|X\|_p}{ \Gamma(p+1)^{\frac 1 p}  } . \label{entr}
\end{align}

For $p \in (-1,0)$, the bound is obtained similarly by applying Lemma \ref{key} to the values $-1,p,0$:
\begin{equation}
F(p) = F((-1)(-p) + 0 \cdot (p+1)) \geq F(-1)^{-p} F(0)^{p+1}.
\end{equation}
Hence,
\begin{equation}\label{moment-max-2}
f_X(0) \leq \frac{\Gamma(p+1)^{\frac{1}{p}}}{2 \|X\|_p}.
\end{equation}
We conclude by reproducing \eqref{entr}.

We now show that equality is attained by $U$ uniformly distributed on a symmetric interval $[-\frac{a}{2}, \frac{a}{2}]$, for some $a > 0$. In this case, we have
\begin{equation}
\|U\|_p^p = \left( \frac{a}{2} \right)^p \frac{1}{p+1}.
\end{equation}
Hence,
\begin{equation}
\frac{1}{p} \log \frac{2^p \|U\|_p^p}{\Gamma(p+1)}  = \log  \frac{a}{\Gamma(p+2)^{\frac{1}{p}}} \xrightarrow[p \to -1]{} \log(a) = h(U).
\end{equation}
\end{proof}

\begin{remark}

From \eqref{moment-max}, \eqref{moment-max-2} and \eqref{max0}, we see that the following statement holds: For every symmetric log-concave random variable $X \sim f_X$, for every $p>-1$, and for every $x \in \R$,
\begin{eqnarray}\label{key2}
f_X(x) \leq \frac {\Gamma(p+1)^{\frac 1 p}} {2 \|X\|_p}.
\end{eqnarray}
Inequality \eqref{key2} is the main ingredient in the proof of Theorem \ref{variable}. It is instructive to provide a direct proof of inequality \eqref{key2} without appealing to Lemma \ref{key}, the ideas going back to \cite{KPB}:

\begin{proof}[Proof of inequality \eqref{key2}]
By considering $X | X \geq 0$, where $X$ is symmetric log-concave, it is enough to show that for every log-concave density $f$ supported on $[0,+\infty)$, one has
\begin{equation}\label{101}
f(0) \left( \int_{0}^{+\infty} x^p f(x) \, \de x \right)^{\frac{1}{p}} \leq \Gamma(p+1)^{\frac{1}{p}}.
\end{equation}
By a scaling argument, one may assume that $f(0)=1$. Take $g(x) = e^{-x}$. If $f=g$, then the result follows by a straightforward computation. Assume that $f \neq g$. Since $f \neq g$ and $\int f = \int g$, the function $f-g$ changes sign at least one time. But since $f(0) = g(0)$, $f$ is log-concave and $g$ is log-affine, the function $f-g$ changes sign exactly once. It follows that there exists a unique point $x_0 > 0$ such that for every $0 < x < x_0$, $f(x) \geq g(x)$, and for every $x > x_0$, $f(x) \leq g(x)$. We deduce that for every $x > 0$, and $p \neq 0$,
\begin{equation}
\frac{1}{p}(f(x) - g(x))(x^p - x_0^p) \leq 0.
\end{equation}
Integrating over $x>0$, we arrive at
\begin{align}
\frac{1}{p}\left( \int_{0}^{+\infty} x^p f(x) \, \de x - \Gamma(p+1) \right) & = \frac{1}{p} \int_{0}^{+\infty} x^p (f(x)-g(x)) \, \de x \\ & = \frac{1}{p} \int_{0}^{+\infty} (x^p-x_0^p) (f(x)-g(x)) \, \de x \\\label{105} & \leq 0,
\end{align}
which yields the desired result.
\end{proof}

Actually, the powerful and versatile result of Lemma \ref{key}, which implies \eqref{key2}, is also proved using the technique in \eqref{101}--\eqref{105}. In the context of information theory, Lemma \ref{key} has been previously applied to obtain reverse entropy power inequalities \cite{BM2}, as well as to establish optimal concentration of the information content \cite{fradelizi2015optimal}. In this paper, we make use of Lemma \ref{key} to prove Theorem \ref{variable}. Moreover, Lemma \ref{key} immediately implies \thmref{2}. Below, we recall the argument for completeness.

\end{remark}

\begin{proof}[Proof of \thmref{2}]
The result follows by applying Lemma \ref{key} to the values $0,p,q$. If $0<p<q$, then
\begin{equation}
F(p) = F \left(0 \cdot \left( 1-\frac{p}{q} \right) + q \cdot \left( \frac{p}{q} \right) \right) \geq F(0)^{1-\frac{p}{q}} F(q)^{\frac{p}{q}}.
\end{equation}
Hence,
\begin{equation}
\frac{\|X\|_p^p}{\Gamma(p+1)} \geq \left( \frac{\|X\|_q^q}{\Gamma(q+1)} \right)^{\frac{p}{q}},
\end{equation}
which yields the desired result. The bound is obtained similarly if $p<q<0$ or if $p<0<q$.
\end{proof}

Next, we provide a proof of Theorem \ref{variable-gen} and Proposition \ref{2-gen}, leveraging the ideas from \cite{BM3}.

\begin{proof}[Proof of Theorem \ref{variable-gen}]
Let $Y$ be an independent copy of $X$. Jensen's inequality yields:
\begin{equation}\label{eq:gen}
h(X) = -\int f_X \log(f_X) \geq - \log \left( \int f_X^2 \right) = - \log(f_{X-Y}(0)).
\end{equation}
Since $X-Y$ is symmetric and log-concave, we can apply inequality \eqref{moment-max} to $X-Y$ to obtain
\begin{equation}\label{eq:gen2}
\frac{1}{f_{X-Y}(0)} \geq \frac{2 \|X-Y\|_p}{\Gamma(p+1)^{\frac{1}{p}}} \geq \frac{2 \|X-\EE[X]\|_p}{\Gamma(p+1)^{\frac{1}{p}}},
\end{equation}
where the last inequality again follows from Jensen's inequality. Combining \eqref{eq:gen} and \eqref{eq:gen2} leads to the desired result:
\begin{equation}
h(X) \geq \log \left( \frac{1}{f_{X-Y}(0)} \right) \geq \log \left( \frac{2 \|X-\EE[X]\|_p}{\Gamma(p+1)^{\frac{1}{p}}} \right).
\end{equation}
For $p=2$, one may tighten \eqref{eq:gen2} by noticing that
\begin{equation}
\|X-Y\|^2_2 = 2\Var{X}.
\end{equation}
Hence,
\begin{equation}
h(X) \geq \log \left( \frac{1}{f_{X-Y}(0)} \right) \geq \log \left( \sqrt{2} \|X-Y\|_2 \right) = \log(2 \sqrt{\Var{X}}).
\end{equation}
\end{proof}

\begin{proof}[Proof of Proposition \ref{2-gen}]
Let $Y$ be an independent copy of $X$. Since $X-Y$ is symmetric and log-concave, we can apply Theorem \ref{2} to $X-Y$. Jensen's inequality and triangle inequality yield:
\begin{equation}
\|X-\EE[X]\|_q \leq \|X-Y\|_q \leq \frac{\Gamma(q+1)^{\frac{1}{q}}}{\Gamma(p+1)^{\frac{1}{p}}} \|X-Y\|_p \leq 2 \frac{\Gamma(q+1)^{\frac{1}{q}}}{\Gamma(p+1)^{\frac{1}{p}}} \|X-\EE[X]\|_p.
\end{equation}
\end{proof}

We now study random vectors $X = (X_1, \dots, X_n)$, where $X_i$, $i \geq 1$, may be dependent. We say that a random vector $X \sim f_X$ is \emph{isotropic} if $X$ is symmetric and for all unit vectors $\theta$, one has
\begin{eqnarray}
\EE[\langle X,\theta \rangle ^2] = m^2_{X},
\end{eqnarray}
for some constant $m_{X} > 0$. Equivalently, $X$ is isotropic if its covariance matrix $K_X$ is a multiple of the identity matrix $I_n$, 
\begin{equation}
 K_X = m_{X}^2 I_n,
\end{equation}
for some constant $m_{X} > 0$.
The constant
\begin{eqnarray}
\ell_{X} \triangleq f_X(0)^{\frac{1}{n}} m_{X}
\end{eqnarray}
is called the \emph{isotropic constant} of $X$. 

It is well known that $\ell_{X}$ is bounded from below by a positive constant independent of the dimension \cite{B}. A long-standing conjecture in convex geometry, the \emph{hyperplane conjecture}, asks whether the isotropic constant of an isotropic log-concave random vector is also bounded from above by a universal constant (independent of the dimension). This conjecture holds under additional assumptions, but in full generality, $\ell_{X}$ is known to be bounded only by a constant that depends on the dimension.  For further details, we refer the reader to \cite{BGVV}. We will use the following upper bounds on $\ell_X$ (see \cite{K} for the best dependence on the dimension up to date).

\begin{lemma}\label{iso}

Let $X$ be an isotropic log-concave random vector in $\R^n$, with $n \geq 2$. Then
\begin{eqnarray}
\ell_X^2 \leq \frac{n^2 e^2}{(n+2) 4 \sqrt{2}}.
\end{eqnarray}
If, in addition, $X$ is unconditional, then
\begin{eqnarray}
\ell_X^2 \leq \frac{e^2}{2}.
\end{eqnarray}
If $X$ is uniformly distributed on a convex set, these bounds hold without factor $e^2$.

\end{lemma}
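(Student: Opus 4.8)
\medskip
\noindent\textbf{Proof proposal.}
Recall that $\ell_X=f_X(0)^{1/n}m_X$ is an affine invariant, so we may assume $X$ is isotropic, i.e.\ $K_X=m_X^2 I_n$; note also that $\ell_X=L_C$, the usual isotropic constant of $C$, when $f_X=\mathbf 1_C/|C|$. In dimension one the bound is immediate from \eqref{key2} with $p=2$: a symmetric log-concave density on $\R$ has $f_X(0)\le\frac1{\sqrt2}\|X\|_2^{-1}$, hence $\ell_X^2\le\frac12$. There is, however, no way to tensorise this, since the coordinates of $X$ need not be independent, so the genuinely $n$-dimensional estimate below is a convex-geometry input.

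\emph{Step 1: reduction to a convex body.} For each $u\in S^{n-1}$ the map $r\mapsto f_X(ru)$ is log-concave, so Lemma~\ref{key} (i.e.\ \cite{Borell}) shows that Ball's body $K_{n+1}(f_X)\subseteq\R^n$ is a convex body; it is symmetric whenever $f_X$ is, unconditional whenever $f_X$ is, and isotropic whenever $f_X$ is. A routine computation (see \cite{BGVV}) gives $\ell_X\le e\,L_{K_{n+1}(f_X)}$ in general, and $\ell_X=L_{K_{n+1}(f_X)}$ when $f_X=\mathbf 1_C/|C|$ is already uniform on a body. Squaring, the general claim follows from the convex-body case at the cost of the factor $e^2$, and the convex-body case at no cost — this is exactly the dichotomy in the last sentence of the lemma.

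\emph{Step 2: the convex-body case.} For a symmetric isotropic convex body $C\subseteq\R^n$ invoke the explicit dimensional slicing estimate $L_C^2\le\frac{n^2}{4\sqrt2\,(n+2)}$, and for an unconditional $C$ the bound $L_C^2\le\frac12$ obtained by sandwiching $C$, in isotropic position, between dilates of $B_1^n$ and $B_\infty^n$ (Bobkov--Nazarov); both are standard in the slicing literature (see \cite{K}, \cite{BGVV}). Combined with Step 1 these give $\ell_X^2\le\frac{e^2n^2}{4\sqrt2(n+2)}$ and $\ell_X^2\le\frac{e^2}{2}$ respectively; the unconditional constant $\frac{e^2}{2}$ is precisely $e^2$ times the one-dimensional seed $\frac12$. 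Some elementary relations feeding Step 2 are again of Borell type: the parallel section function $t\mapsto|C\cap(\theta^\perp+t\theta)|$ is log-concave (in fact $\frac1{n-1}$-concave) by Brunn--Minkowski, so \eqref{key2} bounds $|C\cap\theta^\perp|$ by $\frac1{\sqrt2}L_C^{-1}$, the sharper estimate valid for $\frac1{n-1}$-concave densities bounds $\max_{x\in C}\langle x,\theta\rangle$ by $\sqrt{(n+1)(n+2)/2}\,L_C$, and these combine with $|C|=\int|C\cap(\theta^\perp+t\theta)|\,\de t$.

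\emph{Main obstacle.} The body-level bounds in Step 2 are the real content, and they are \emph{not} formal consequences of the scale-invariant one-directional estimates just listed (any product of those is vacuous); they require the slicing machinery — reverse-Santal\'o / volume-ratio type inputs in general, and the Bobkov--Nazarov comparison in the unconditional case. Accordingly I would carry out Step 1 in detail and cite \cite{K}, \cite{BGVV} for Step 2; a secondary point is to pin the comparison constant in Step 1 to exactly $e$ rather than merely $O(1)$, which requires the precise normalisation of Ball's body.
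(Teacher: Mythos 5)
Your proposal is correct and follows essentially the same route as the paper: the paper's proof likewise cites Ball's convex-body estimate $L_C^2\le \frac{n^2}{4\sqrt2(n+2)}$ (\cite[Lemma 8]{B}) and the Bobkov--Nazarov bound $L_C^2\le\frac12$ for unconditional bodies, and then transfers to general log-concave densities at the cost of a factor $e^2$ via Ball's Theorem 7 (your Step 1 with the body $K_{n+1}(f_X)$). The additional Borell-type remarks in your Step 2 are not needed, but they do not affect the argument.
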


Even though the bounds in \lemref{iso} are well known, we could not find a reference in the literature. We thus include a short proof for completeness.

\begin{proof}
It was shown by Ball \cite[Lemma 8]{B} that if $X$ is uniformly distributed on a convex set, then $\ell_{X}^2 \leq \frac{n^2}{(n+2) 4 \sqrt{2}}$. If $X$ is uniformly distributed on a convex set and is unconditional, then it is known that $\ell_{X}^2 \leq \frac{1}{2}$ (see e.g. \cite[Proposition 2.1]{BN}). Now, one can pass from uniform distributions on a convex set to log-concave distributions at the expense of an extra factor $e^2$, as shown by Ball \cite[Theorem 7]{B}.
\end{proof}

We are now ready to prove Theorem \ref{vector}.

\begin{proof}[Proof of \thmref{vector}]
Let $\widetilde{X} \sim f_{\widetilde{X}}$ be an isotropic log-concave random vector. Notice that $f_{\widetilde{X}}(0)^{\frac{2}{n}}|K_{\widetilde{X}}|^{\frac{1}{n}} = \ell_{\widetilde{X}}^2$, hence, using Lemma \ref{iso}, we have
\begin{eqnarray}\label{iso-eq}
h(\widetilde{X}) = \int f_{\widetilde{X}}(x) \log  \frac{1}{f_{\widetilde{X}}(x) } \de x \geq \log \frac{1}{{f}_{\widetilde{X}}(0)}  \geq \frac{n}{2} \log \frac{|K_{\widetilde{X}}|^{\frac{1}{n}}}{c(n)},
\end{eqnarray}
with $c(n) = \frac{n^2 e^2}{(n+2)4 \sqrt{2}}$. If, in addition, $\widetilde{X}$ is unconditional, then again by Lemma \ref{iso}, $c(n) = \frac{e^2}{2}$.

Now consider an arbitrary symmetric log-concave random vector $X$. One can apply a change of variable to put $X$ in isotropic position. Indeed, by defining $\widetilde{X} = K_X^{-\frac{1}{2}} X$, one has for every unit vector $\theta$,
\begin{align}
\EE[\langle \widetilde{X}, \theta \rangle ^2] = \EE[\langle  X, K_X^{-\frac{1}{2}} \theta \rangle ^2] = \langle K_X (K_X^{-\frac{1}{2}} \theta), K_X^{-\frac{1}{2}} \theta \rangle = 1.
\end{align}
It follows that $\widetilde{X}$ is an isotropic log-concave random vector with isotropic constant $1$. Therefore, we can use \eqref{iso-eq} to obtain
\begin{eqnarray}
h(\widetilde{X}) \geq \frac{n}{2} \log \frac{1}{c(n)},
\end{eqnarray}
where $c(n) = \frac{n^2 e^2}{(n+2)4 \sqrt{2}}$ in general, and $c(n) = \frac{e^2}{2}$ when $X$ is unconditional. We deduce that
\begin{align}
h(X) = h(\widetilde{X}) + \frac{n}{2} \log  |K_X|^{\frac{1}{n}} \geq \frac{n}{2} \log \frac{|K_X|^{\frac{1}{n}}}{c(n)}.
\end{align}
\end{proof}

For isotropic unconditional log-concave random vectors, we extend Theorem \ref{vector} to other moments. First we need the following lemma.

\begin{lemma}\cite[Proposition 3.2]{BN}\label{section}

Let $X \sim f_X$ be an isotropic unconditional log-concave random vector. Then, for every $i \in \{1, \dots, n\}$,
\begin{equation}
f_{X_i}(0) \geq \frac{f_X(0)^{\frac{1}{n}}}{c},
\end{equation}
where $f_{X_i}$ is the marginal distribution of $i$-th component of $X$, i.e. for every $t \in \R$,
\begin{equation}
f_{X_i}(t) = \int_{\R^{n-1}} f_X(x_1, \dots, x_{i-1}, t, x_{i+1}, \dots, x_n) \, \de x_1 \dots \de x_{i-1} \de x_{i+1} \dots \de x_n.
\end{equation}
Here, $c = e \sqrt{6}$. If, in addition, $f_X$ is invariant under permutations of coordinates, then $c = e$.

\end{lemma}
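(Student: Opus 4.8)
The plan is to reduce each side of the claimed inequality to a one-dimensional quantity and then invoke the known upper bounds on the isotropic constant. Since $f_X$ is unconditional and log-concave it attains its maximum at the origin, so $f_X(0)=\|f_X\|_\infty$; likewise each marginal $f_{X_i}$ is an even log-concave density on $\R$, so $f_{X_i}(0)=\|f_{X_i}\|_\infty$. Because $X$ is isotropic, $K_X=m_X^2 I_n$ for some $m_X>0$, hence $\Var{X_i}=m_X^2$ for every $i$, and by the definition of the isotropic constant $f_X(0)^{1/n}=\ell_X/m_X$.

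For the left-hand side I would invoke the classical one-dimensional estimate that every log-concave density $\psi$ on $\R$ satisfies $\|\psi\|_\infty\sqrt{\Var{\psi}}\geq \frac{1}{2\sqrt{3}}$, with equality for the uniform density. This can be proved by the same single-crossing argument used for \eqref{key2}: letting $u$ be the uniform density on an interval centered at $0$ with the same maximum and total mass as $f_{X_i}$, one has $f_{X_i}-u\leq 0$ on that interval and $f_{X_i}-u\geq 0$ outside it, so $\int x^2(f_{X_i}-u)\,\de x\geq 0$; since $f_{X_i}$ is even this gives $\Var{X_i}\geq \frac{1}{12\,f_{X_i}(0)^2}$, i.e. $f_{X_i}(0)\geq \frac{1}{2\sqrt{3}\,m_X}$. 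For the right-hand side, \lemref{iso} gives $\ell_X^2\leq e^2/2$ in the unconditional case, whence $f_X(0)^{1/n}=\ell_X/m_X\leq \frac{e}{\sqrt{2}\,m_X}$. Dividing the two estimates, $f_X(0)^{1/n}/f_{X_i}(0)\leq (e/\sqrt{2})/(1/(2\sqrt{3}))=e\sqrt{6}$, which is the assertion with $c=e\sqrt{6}$.

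For the refinement $c=e$ when $f_X$ is in addition invariant under permutations of coordinates, the same computation shows it is enough to improve the isotropic-constant bound to $\ell_X\leq e/(2\sqrt{3})$, i.e. $\ell_X^2\leq e^2/12$. This is the $1$-symmetric analogue of \lemref{iso}: among $1$-symmetric (unconditional and permutation-invariant) convex bodies the isotropic constant is maximized by the cube, for which $\ell^2=1/12$, and Ball's reduction from uniform distributions on convex bodies to general log-concave distributions costs a factor $e^2$ in $\ell^2$, exactly as in \lemref{iso}. I expect the only genuine obstacle to be this last input — the sharp control of the isotropic constant in the unconditional and $1$-symmetric settings, which is of hyperplane-conjecture flavor; the reduction to dimension one and the elementary one-dimensional estimate cost essentially nothing.
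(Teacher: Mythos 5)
This lemma is imported into the paper by citation to \cite[Proposition 3.2]{BN} and is not proved there, so I am judging your argument on its own terms. Your derivation of the constant $c=e\sqrt 6$ is correct and self-contained given \lemref{iso}: the identity $f_X(0)^{1/n}=\ell_X/m_X$, the bound $\ell_X^2\le e^2/2$ for unconditional vectors, and the one-dimensional estimate $f_{X_i}(0)^2\,\Var{X_i}\ge \tfrac1{12}$ all hold (your single-crossing comparison with the uniform density of the same height and total mass is valid, since $x^2-x_0^2$ and $f_{X_i}-u$ have the same sign everywhere), and they combine exactly as you say to give $f_X(0)^{1/n}/f_{X_i}(0)\le 2\sqrt3\,\ell_X\le e\sqrt6$.

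The refinement $c=e$, however, has a genuine gap. Your route needs $\ell_X\le e/\sqrt{12}$, hence, after Ball's reduction, $\ell_K^2\le \tfrac1{12}$ for $1$-symmetric convex bodies, i.e.\ that the cube maximizes the isotropic constant in that class. That input is not available: the paper's \lemref{iso} gives only $\tfrac12$ for unconditional bodies and offers no improvement under permutation invariance, and identifying the cube as the extremizer is a sharp form of the slicing problem for this class rather than a quotable theorem; you cannot treat it as known. The constant $e$ can instead be reached by a route that bypasses the isotropic constant entirely. For unconditional log-concave $f_X$ one has $f_X(x)\le f_X(x_1,\dots,x_{i-1},0,x_{i+1},\dots,x_n)$ for each $i$; writing $g_i$ for the restriction of $f_X$ to the hyperplane $\{x_i=0\}$, so that $\int_{\R^{n-1}}g_i=f_{X_i}(0)$, this gives $f_X^{\,n/(n-1)}\le\prod_i g_i^{1/(n-1)}$, and the functional Loomis--Whitney inequality yields $\int f_X^{\,n/(n-1)}\le\prod_i f_{X_i}(0)^{1/(n-1)}$. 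On the other hand, Jensen's inequality gives $\int f_X^{\,n/(n-1)}=\EE[f_X(X)^{1/(n-1)}]\ge e^{-h(X)/(n-1)}$, and the maximum-entropy bound $h(X)\le n-\log f_X(0)$ for log-concave densities then gives $\int f_X^{\,n/(n-1)}\ge f_X(0)^{1/(n-1)}e^{-n/(n-1)}$. Hence $\prod_i f_{X_i}(0)\ge f_X(0)\,e^{-n}$, and permutation invariance (which forces all $f_{X_i}(0)$ to coincide) turns this into $f_{X_i}(0)\ge f_X(0)^{1/n}/e$. As written, your proof establishes only the $c=e\sqrt6$ case.
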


\begin{thm}\label{extend-iso}

Let $X=(X_1, \dots, X_n)$ be an isotropic unconditional log-concave random vector. Then, for every $p>-1$,
\begin{equation}
h(X) \geq \max_{i \in \{1, \dots, n\}} n \log \left( \frac{2 \|X_i\|_p}{\Gamma(p+1)^{\frac{1}{p}}} \frac{1}{c} \right),
\end{equation}
where $c = e\sqrt{6}$. If, in addition, $f_X$ is invariant under permutations of coordinates, then $c=e$.

\end{thm}

\begin{proof}
Let $i \in \{1, \dots, n\}$. We have
\begin{equation}
\|X_i\|_p^p = \int_{\R} |t|^p f_{X_i}(t) \, \de t.
\end{equation}
Since $f_X$ is unconditional and log-concave, it follows that $f_{X_i}$ is symmetric and log-concave, so inequality \eqref{moment-max} applies to $f_{X_i}$:
\begin{equation}
\int_{\R} |t|^p f_{X_i}(t) \, \de t \leq \frac{\Gamma(p+1)}{2^p f_{X_i}(0)^p}. \label{eq:extend-isoa}
\end{equation}
We apply Lemma \ref{section} to pass from $f_{X_i}$ to $f_X$ in the right side of \eqref{eq:extend-isoa}:
\begin{equation}\label{moment-max-iso}
f_X(0)^{\frac{1}{n}} \|X_i\|_p \leq \frac{\Gamma(p+1)^{\frac{1}{p}} c}{2}.
\end{equation}
Thus,
\begin{equation}
h(X) \geq \log  \frac{1}{f_X(0)}  \geq n \log  \frac{2 \|X_i\|_p}{\Gamma(p+1)^{\frac{1}{p}} c}.
\end{equation}
\end{proof}

\section{Extension to $\gamma$-concave random variables}
\label{convex-measures}

In this section, we prove \thmref{convex-variable}, which extends \thmref{variable} to the class of $\gamma$-concave random variables, with $\gamma < 0$. First, we need the following key lemma, which extends Lemma \ref{key}.

\begin{lemma}\label{key-gen}\cite[Theorem 7]{FGP}

Let $f : [0, +\infty) \to [0, +\infty)$ be a $\gamma$-concave function, with $\gamma < 0$. Then, the function
\begin{equation}
F(r) = \frac{\Gamma(-\frac{1}{\gamma})}{\Gamma(-\frac{1}{\gamma} - (r+1))} \frac{1}{\Gamma(r+1)} \int_0^{+\infty} t^r f(t) \, \de t
\end{equation}
is log-concave on $[-1,-1-\frac{1}{\gamma})$.

\end{lemma}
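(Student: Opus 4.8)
The plan is to follow the classical Karlin--Proschan--Barlow / Borell sign-change technique — the same method used for inequality \eqref{key2} above — now adapted to the $\gamma$-concave setting. Write $\beta \triangleq -\frac{1}{\gamma} > 0$, so that the claimed domain of $F$ is $[-1,\beta-1)$, and it suffices to prove log-concavity on the open interval $(-1,\beta-1)$, extending to $r=-1$ by continuity. The starting observation is that the \emph{extremal} $\gamma$-concave functions on $[0,+\infty)$ are the ``extended Cauchy'' profiles $g_{a,v}(t) = a(v+t)^{-\beta}$ with $a,v>0$, since $g_{a,v}^{\gamma} = a^{\gamma}(v+t)$ is affine, hence convex. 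Using $\int_0^\infty t^r (v+t)^{-\beta}\,dt = v^{\,r+1-\beta}B(r+1,\beta-1-r) = v^{\,r+1-\beta}\frac{\Gamma(r+1)\Gamma(\beta-1-r)}{\Gamma(\beta)}$, convergent precisely for $r\in(-1,\beta-1)$, one finds
\[
F_{g_{a,v}}(r) = \frac{\Gamma(\beta)}{\Gamma(\beta-1-r)\,\Gamma(r+1)}\int_0^\infty t^r g_{a,v}(t)\,dt = a\, v^{\,r+1-\beta},
\]
which is \emph{log-affine} in $r$. Thus the Gamma-ratio normalization in the definition of $F$ is exactly what places the extremals on the boundary of log-concavity, and the general case should follow by comparison with a fitted extremal.

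Fix $r_0<r_1$ in $(-1,\beta-1)$ and $\lambda\in(0,1)$, set $r_\lambda=(1-\lambda)r_0+\lambda r_1$; we must show $F_f(r_\lambda)\ge F_f(r_0)^{1-\lambda}F_f(r_1)^{\lambda}$. Discarding the trivial cases (if $f\equiv0$, or if $F_f\equiv+\infty$ on the interval, which by convexity of $f^{\gamma}$ happens exactly when $f^{\gamma}$ is bounded), we may assume $0<F_f(r_0),F_f(r_1)<\infty$. Choose $a,v>0$ so that the extremal $g:=g_{a,v}$ matches $f$ at the two exponents, $F_g(r_0)=F_f(r_0)$ and $F_g(r_1)=F_f(r_1)$; this is a solvable $2\times2$ system, $v=(F_f(r_1)/F_f(r_0))^{1/(r_1-r_0)}$ and then $a$, with $a,v>0$. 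Since $F_g$ is log-affine and agrees with $F_f$ at $r_0,r_1$, we get $F_g(r_\lambda)=F_f(r_0)^{1-\lambda}F_f(r_1)^{\lambda}$, so it remains to prove $F_f(r_\lambda)\ge F_g(r_\lambda)$, equivalently $\int_0^\infty t^{r_\lambda}(f-g)\,dt\ge 0$ since the Gamma prefactor is common. Now the hypothesis enters: $f^{\gamma}-g^{\gamma}$ is convex (difference of the convex $f^{\gamma}$ and the affine $g^{\gamma}$), hence changes sign at most twice; because $x\mapsto x^{\gamma}$ is order-reversing on $(0,\infty)$, $\phi:=f-g$ changes sign at most twice, and if exactly twice the pattern is $-,+,-$ (the ``$-$'' tails being consistent with $\phi=-g<0$ off the support of $f$). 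If $\phi$ has at most one sign change, then being orthogonal to the independent weights $t^{r_0},t^{r_1}$ forces $\phi\equiv0$ and we are done; otherwise let $p_1<p_2$ be its sign-change points. Put $\psi(t)=t^{r_\lambda}-\alpha t^{r_0}-\mu t^{r_1}=t^{r_0}\bigl(s^{\lambda}-\alpha-\mu s\bigr)$ with $s=t^{r_1-r_0}$, and choose $\alpha+\mu s$ to be the secant line of the strictly concave function $s\mapsto s^{\lambda}$ through $s=p_1^{r_1-r_0}$ and $s=p_2^{r_1-r_0}$: then $\alpha,\mu>0$, $\psi$ vanishes exactly at $p_1,p_2$, and $\psi$ inherits the same $-,+,-$ sign pattern as $\phi$. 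Hence $\psi\phi\ge0$ pointwise, so $\int_0^\infty \psi\phi\,dt\ge0$; and since $\int t^{r_0}\phi=\int t^{r_1}\phi=0$ by construction, $\int t^{r_\lambda}\phi=\int\psi\phi\ge0$, as required.

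The genuinely delicate part is the sign-rule bookkeeping of the last step: confirming that $\phi=f-g$ carries the orientation $-,+,-$ rather than $+,-,+$, that the secant-line fit yields \emph{positive} coefficients $\alpha,\mu$ (a Chebyshev-system property of $\{t^{r_0},t^{r_\lambda},t^{r_1}\}$, made transparent by the substitution $s=t^{r_1-r_0}$ and concavity of $s^{\lambda}$), and that the reduction to $0<F_f(r_0),F_f(r_1)<\infty$ is legitimate — which rests on the dichotomy that a convex $f^{\gamma}$ on $[0,\infty)$ either grows at least linearly, giving $f(t)=O(t^{-\beta})$ and finiteness of all moments of order $<\beta-1$, or is bounded, in which case $F_f\equiv+\infty$. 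The integrability restrictions $r>-1$ (at $0$, using boundedness of $\gamma$-concave $f$ near the origin) and $r<\beta-1$ (at $\infty$) are precisely what single out the stated interval $[-1,-1-\tfrac{1}{\gamma})$. Finally, a rescaled limit $\gamma\to0$ (replacing $g_{a,v}$ by $(1+ct/\beta)^{-\beta}\to e^{-ct}$ and absorbing $\Gamma(\beta)/\Gamma(\beta-1-r)\sim\beta^{r+1}$ into a rescaling of $t$) recovers Lemma~\ref{key}.
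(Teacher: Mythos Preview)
The paper does not supply its own proof of this lemma: it is quoted verbatim from \cite[Theorem~7]{FGP} and then applied to derive Theorem~\ref{convex-variable}. So there is no in-paper argument to compare against; your proposal is an independent proof.

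Your argument is the natural $\gamma$-concave adaptation of the Borell/KPB sign-change method the paper uses for Lemma~\ref{key} and inequality~\eqref{key2}, and it is essentially correct. The identification of the extremal profiles $g_{a,v}(t)=a(v+t)^{-\beta}$, the Beta-integral computation showing $F_{g_{a,v}}(r)=a\,v^{\,r+1-\beta}$ is log-affine, the two-point matching of $F_g$ to $F_f$, and the Chebyshev-system step (building $\psi=t^{r_\lambda}-\alpha t^{r_0}-\mu t^{r_1}$ via the secant of $s\mapsto s^{\lambda}$) are all sound and are, in outline, exactly how the cited reference proceeds. Two places deserve a sentence more of care in a final write-up: (i) the sign-pattern claim for $\phi=f-g$ should be argued with $f^{\gamma}$ interpreted as the extended-convex function equal to $+\infty$ off $\mathrm{supp}\,f$, so that convexity of $f^{\gamma}-g^{\gamma}$ and the ``at most two sign changes, pattern $+,-,+$'' conclusion hold on all of $[0,\infty)$, not just on the support; (ii) the finiteness/infiniteness dichotomy you invoke is correct, but state it as: either $f^{\gamma}$ is bounded above on $[0,\infty)$, forcing $f\ge c>0$ and hence $F_f\equiv+\infty$, or $f^{\gamma}$ is unbounded and (by convexity) eventually dominates an affine function with positive slope, giving $f(t)=O(t^{-\beta})$ and finiteness of $F_f$ on the whole interval $(-1,\beta-1)$; this also covers the intermediate point $r_\lambda$, which you use implicitly when splitting $\int\psi\phi$. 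The remark that positivity of $\alpha,\mu$ is a byproduct of the secant construction is correct but in fact unnecessary for the conclusion, since the two moment constraints already annihilate the $\alpha$- and $\mu$-terms.
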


One can recover Lemma \ref{key} from Lemma \ref{key-gen} by letting $\gamma$ go to $0$.

\begin{proof}[Proof of \thmref{convex-variable}]
Let $p \in (-1,0)$. Let us denote by $f_X$ the probability density function of $X$. By applying Lemma \ref{key-gen} to the values $-1,p,0$, we have
$$ F(p) = F(-1 \cdot (-p) + 0 \cdot (p+1)) \geq F(-1)^{-p} F(0)^{p+1}. $$
From the proof of Theorem \ref{variable}, we deduce that $F(-1) = f_X(0)$. Also, notice that
\begin{equation}
F(0) = \frac{1}{2} \frac{\Gamma(-\frac{1}{\gamma})}{\Gamma(-\frac{1}{\gamma} - 1)}.
\end{equation}
Hence,
\begin{equation}
f_X(0)^{-p} \leq \frac{2^p \|X\|_p^p}{\Gamma(p+1)} \frac{\Gamma(-1 -\frac{1}{\gamma})^{p+1}}{\Gamma(-\frac{1}{\gamma})^p \Gamma(-\frac{1}{\gamma}- (p+1))},
\end{equation}
and the bound on differential entropy follows: 
\begin{equation}
h(X) \geq \log  \frac{1}{f_X(0)} \geq \frac{1}{p} \log \left( \frac{2^p \|X\|_p^p}{\Gamma(p+1)} \frac{\Gamma(-1 -\frac{1}{\gamma})^{p+1}}{\Gamma(-\frac{1}{\gamma})^p \Gamma(-\frac{1}{\gamma}- (p+1))} \right).
\end{equation}

If $p \in (0,-1 - \frac{1}{\gamma})$, then the bound is obtained similarly by applying Lemma \ref{key-gen} to the values $-1,0,p$.
\end{proof}

\section{Reverse entropy power inequality with explicit constant}
\label{reverse-EPI}

In this section, we apply the bounds on the differential entropy obtained in Section \ref{sec:main} to derive reverse entropy power inequalities with explicit constants for uncorrelated, possibly dependent random vectors. We start with the 1-dimensional case.

\begin{proof}[Proof of \thmref{rev-EPI-gen}]
Using the upper bound on the differential entropy \eqref{h2}, we have
\begin{equation}
h(X+Y) \leq \frac{1}{2} \log(2 \pi e \Var{X+Y}) = \frac{1}{2} \log(2 \pi e (\Var{X} + \Var{Y})),
\end{equation}
the last equality being valid since $X$ and $Y$ are uncorrelated. Hence,
\begin{equation}
N(X+Y) \leq 2 \pi e (\Var{X} + \Var{Y}).
\end{equation}
Using inequality \eqref{p=2-gen}, we have
\begin{equation}
\Var{X} \leq \frac{N(X)}{4} \quad \mbox{and} \quad \Var{Y} \leq \frac{N(Y)}{4}.
\end{equation}
We conclude that
\begin{equation}
N(X+Y) \leq \frac{\pi e}{2} (N(X) + N(Y)).
\end{equation}
\end{proof}

As a consequence of Corollary \ref{convex-variable-2}, reverse entropy power inequalities for more general distributions can be obtained. In particular, for any uncorrelated symmetric $\gamma$-concave random variables $X$ and $Y$, with $\gamma \in (-\frac{1}{3},0)$,
\begin{equation}
N(X+Y) \leq \pi e \frac{(\gamma + 1)^2}{(2\gamma + 1)(3\gamma + 1)} (N(X) + N(Y)).
\end{equation}

We now prove a reverse entropy power inequality in higher dimension.

\begin{proof}[Proof of \thmref{reverse-vector}]
Since $X$ and $Y$ are uncorrelated and $K_X$ and $K_Y$ are proportionals, 
\begin{equation}
|K_{X+Y}|^{\frac{1}{n}} = |K_X + K_Y|^{\frac{1}{n}} = |K_X|^{\frac{1}{n}} + |K_Y|^{\frac{1}{n}}. \label{eq:reverse-vectora}
\end{equation}
Using \eqref{eq:reverse-vectora} and the upper bound on the differential entropy \eqref{compare-entr-vect}, we obtain
\begin{equation}
h(X+Y) \leq \frac{n}{2} \log \left( 2 \pi e |K_{X+Y}|^{\frac{1}{n}} \right) = \frac{n}{2} \log \left( 2 \pi e \left( |K_X|^{\frac{1}{n}} + |K_Y|^{\frac{1}{n}} \right) \right).
\end{equation}
Using Theorem \ref{vector}, we conclude that
\begin{equation}
N(X+Y) \leq 2 \pi e \left( |K_X|^{\frac{1}{n}} + |K_Y|^{\frac{1}{n}} \right) \leq 2 \pi e \, c(n)(N(X)+N(Y)),
\end{equation}
where $c(n) = \frac{e^2 n^2}{4 \sqrt{2} (n+2)}$ in general, and $c(n)=\frac{e^2}{2}$ if $X$ and $Y$ are unconditional.
\end{proof}

\section{New bounds on the rate-distortion function}
\label{sec:rd}

We are now ready to prove Theorems \ref{rate-gen} and \ref{rate}.

\begin{proof}[Proof of \thmref{rate-gen}]
Denote for brevity  $\sigma = \|X\|_2$.

Under mean-square error distortion ($r=2$), the result is implicit in \cite[Chapter 10]{cover2012elements}. We first recall the argument for the reader convenience, and then proceed to the general case.

Let $r=2$. Assume that $\sigma^2 > d$. We take
\begin{equation}
 \hat{X} = \left(1-\frac{d}{\sigma^2}\right) \left(X+Z \right),
\end{equation}
where $Z \sim \mathcal{N}\left(0,\frac{\sigma^2 d}{\sigma^2 - d} \right)$ is independent of $X$. Note that $\|X - \hat X\|_2^2 = d$. Upper-bounding the rate-distortion function by the mutual information between $X$ and $\hat X$, we obtain
\begin{align}
 \Rd &\leq I(X;\hat{X}) \\
 &= h(\hat{X}) - h(\hat{X} | X) \\
 &= h(X+Z) - h(Z),
\end{align}
where we used homogeneity of differential entropy for the last equality. Invoking the upper bound on the differential entropy \eqref{h2}, we have
\begin{align}
h(X+Z) - h(Z) \leq &~ \frac{1}{2} \log \left( 2 \pi e \left( \sigma^2 + \frac{\sigma^2 d}{\sigma^2 - d} \right) \right) - h(Z) \\ 
 =&~ \frac{1}{2} \log \frac{\sigma^2}{d} \\ 
 =&~ \Rslb + D(X || G_X).
\end{align}
If $\sigma^2 \leq d$, then setting $\hat{X} \equiv 0$ leads to $\Rd = 0$. \\

1) Let $r \in [1,2]$. Assume that $\sigma > d^{\frac{1}{r}}$. We take 
\begin{equation}
 \hat{X} = \left(1-\frac{d^{\frac{2}{r}}}{\sigma^2}\right) \left(X+Z \right), \label{Xhat1}
\end{equation}
where $Z \sim \mathcal{N}\left(0,\frac{\sigma^2 d^{\frac{2}{r}}}{\sigma^2 - d^{\frac{2}{r}}} \right)$ is independent of $X$. This choice of $\hat{X}$ is admissible since
\begin{align}
\|X-\hat{X}\|_r^r \leq&~  \|X-\hat{X}\|_2^r   \\ 
 =&~  \left[ \left( \frac{d^{\frac{2}{r}}}{\sigma^2} \right)^2 \sigma^2 + \left( 1-\frac{d^{\frac{2}{r}}}{\sigma^2} \right)^2 \|Z\|_2^2 \right]^{\frac{r}{2}} \\
 =&~ d,
\end{align}
where we used $r \leq 2$ and the left-hand side of inequality \eqref{compare-moment}.
Upper-bounding the rate-distortion function by the mutual information between $X$ and $\hat X$, we obtain
\begin{align}
 \Rd &\leq I(X;\hat{X})\\
 &= h(\hat{X}) - h(\hat{X} | X) \\
 &= h(X+Z) - h(Z),
\end{align}
where we used homogeneity of entropy for the last equality. Invoking the upper bound on the differential entropy \eqref{h2}, we have
\begin{align}
h(X+Z) - h(Z) \leq &~ \frac{1}{2} \log \left( 2 \pi e \left( \sigma^2 + \frac{\sigma^2 d^{\frac{2}{r}}}{\sigma^2 - d^{\frac{2}{r}}} \right) \right) - h(Z) \\ 
 =&~ \frac{1}{2} \log \frac{\sigma^2}{d^{\frac{2}{r}}} \\ 
 =&~ \Rslb + D(X || G_X) +  \log \frac{\alpha_r}{\sqrt{2 \pi e}},
\end{align}
and \eqref{eq:ub} follows. 

If $\|X\|_2 \leq d^{\frac{1}{r}}$, then $\|X\|_r \leq \|X\|_2 \leq d^{\frac 1 r}$, and setting $\hat{X} \equiv 0$ leads to $\Rd = 0$. \\

2) Let $r > 2$.\footnote{The argument presented here works for every $r \geq 1$. However, for $r \in [1,2]$, the argument in part 1) provides a tighter bound.} Assume that $\sigma \geq d^{\frac{1}{r}}$. We take
\begin{equation}
 \hat{X} = X+Z, \label{Xhatrlarge}
\end{equation}
 where $Z$ is independent of $X$ and realizes the maximum differential entropy under the $r$-th moment constraint, $\|Z\|_r^r=d$. The probability density function of $Z$ is given by
\begin{equation}
f_Z(x) = \frac{r^{1-\frac{1}{r}}}{2 \Gamma \left( \frac{1}{r} \right) d^{\frac{1}{r}}} e^{-\frac{|x|^r}{rd}}, \quad x \in \R.
\end{equation}
Notice that 
\begin{equation}
 \|Z\|_2^2= d^{\frac{2}{r}} \frac{r^{\frac{2}{r}} \Gamma(\frac{3}{r})}{\Gamma(\frac{1}{r})}.
\end{equation}
We have
\begin{align}
h(X+Z) - h(Z) \leq &~ \frac{1}{2} \log(2 \pi e (\sigma^2 + \|Z\|_2^2)) - \log(\alpha_r d^{\frac 1 r}) \label{2d}\\ 
\leq&~ \Rslb + \log \left( \sqrt{2 \pi e} \beta_r \sigma \right) - h(X) \\ 
 =&~ \Rslb + D(X || G_X) +  \log \beta_r,
\end{align}
where $\beta_r$ is defined in \eqref{betar}. 
Hence,
\begin{equation}
\Rd - \Rslb \leq D(X || G_X) + \log \beta_r. \label{rlarge}
\end{equation}

If $\|X\|_r^r \leq d$, then setting $\hat{X} \equiv 0$ leads to $\Rd = 0$.

Finally, if $\|X\|_r^r > d$ and $\sigma < d^{\frac{1}{r}}$, then from \eqref{2d} we obtain
\begin{align}
\Rd &\leq \log \left( \sqrt{2 \pi e} \beta_r d^{\frac{1}{r}} \right) - \log(\alpha_r d^{\frac 1 r}) \\
&= \log  \frac{ \sqrt{2 \pi e}  \beta_r}{ \alpha_r}.
\end{align}
\end{proof}

We now provide a different upper bound for symmetric log-concave random variable that improves \eqref{rlarge} for values of $r>2$ close to 2 (but is worse than \eqref{rlarge} when $r$ is large, see \figref{fig:slb}).

\begin{proof}[Proof of Theorem \ref{rate}]
Denote for brevity  $\sigma = \|X\|_2$, and recall that $X$ is a symmetric log-concave random variable.

Assume that $\sigma \geq d^{\frac{1}{r}}$. We take 
\begin{align}
 \hat{X} &= \left(1-\frac{\delta}{\sigma^2}\right)(X+Z),\label{Xhat2}\\
 \delta &\triangleq \frac{2}{\Gamma(r+1)^{\frac{2}{r}}} d^{\frac{2}{r}},
\end{align}
where $Z \sim \mathcal{N}\left(0,\frac{\sigma^2 \delta}{\sigma^2 - \delta}\right)$ is independent of $X$. This choice of $\hat{X}$ is admissible since
\begin{align}
 \|X-\hat{X}\|_r^r &\leq \|X-\hat{X}\|_2^r \frac{\Gamma(r+1)}{2^{\frac{r}{2}}}\\
  &= \delta^{\frac{r}{2}} \frac{\Gamma(r+1)}{2^{\frac{r}{2}}} \\
  &= d,
\end{align}
where we used $r> 2$ and \thmref{2}.
 Using the upper bound on the differential entropy \eqref{h2}, we have
\begin{align}
 h(X+Z) - h(Z) \leq &~ \frac{1}{2} \log \left( 2\pi e \left( \sigma^2+\frac{\sigma^2 \delta}{\sigma^2-\delta} \right) \right) - h(Z) \\ 
  =&~ \frac{1}{2} \log  \frac{\sigma^2}{\delta}  \label{2c}\\ 
  =&~ \Rslb + D(X || G_X) + \log  \frac{\alpha_r \Gamma(r+1)^{\frac 1 r}}{2 \sqrt{\pi e}}.
\end{align}
Hence,
\begin{equation}\label{r>2}
\Rd - \Rslb \leq D(X || G_X) + \log \frac{\alpha_r \Gamma(r+1)^{\frac 1 r}}{2 \sqrt{\pi e}} .
\end{equation}

If $\sigma^2 \leq \delta$, then from \thmref{2} $\|X\|_r^r \leq d$, hence $\Rd = 0$.

Finally, if $\|X\|_r^r > d$ and $\sigma^2 \in (\delta, d^{\frac{2}{r}})$, then from \eqref{2c} we obtain
\begin{align}
 \Rd &\leq \frac{1}{2} \log  \frac{\sigma^2}{\delta}  \\
 &\leq \frac{1}{2} \log \frac{\Gamma(r+1)^{\frac{2}{r}}}{2}.
\end{align}
\end{proof}

\begin{remark} ~\\

1) Let us explain the strategy in the proof of Theorems \ref{rate-gen} and \ref{rate}. By definition, $\Rd \leq I(X;\hat{X})$ for any $\hat{X}$ satisfying the constraint. In our study, we chose $\hat{X}$ of the form $\lambda(X+Z)$, with $\lambda \in [0,1]$, where $Z$ is independent of $X$. To find the best bounds possible with this choice of $\hat{X}$, we need to minimize $\|X-\hat{X}\|_r^r$ over $\lambda$. Notice that if 
\begin{equation}
 \hat{X}=\lambda(X+Z)
\end{equation}
 and $Z$ symmetric, then 
\begin{equation}
 \|X-\hat{X}\|_r^r = \|(1-\lambda) X + \lambda Z\|_r^r.
\end{equation}
To estimate $\|(1-\lambda) X + \lambda Z\|_r^r$ in terms of $\|X\|_r$ and $\|Z\|_r$, one can use triangle inequality and the convexity of $\|\cdot\|^r$ to get the bound
\begin{equation}\label{choice1}
\|(1-\lambda) X + \lambda Z\|_r^r \leq 2^{r-1} ( (1-\lambda)^r \|X\|_r^r + \lambda^r \|Z\|_r^r),
\end{equation}
or, one can apply Jensen's inequality directly to get the bound
\begin{equation}\label{choice2}
\|(1-\lambda) X + \lambda Z\|_r^r  \leq (1-\lambda) \|X\|_r^r + \lambda \|Z\|_r^r.
\end{equation}
Consider the function
\begin{equation}
F(\lambda) = 2^{r-1} ((1-\lambda)^r a + \lambda^r b)
\end{equation}
that appears on the right-hand side of \eqref{choice1}, and the function
\begin{equation}
G(\lambda) = (1-\lambda) a + \lambda b
\end{equation}
that appears on the right-hand side of \eqref{choice2}, for fixed $a,b>0$. Notice that
\begin{eqnarray}
F(\lambda) = 2^{r-1} ((1-\lambda)^r a + \lambda^r b) & \geq & 2^r \left( \frac{1}{2}(1-\lambda)^r + \frac{1}{2} \lambda^r \right) \min(a,b) \\\label{184} & \geq & \min(a,b) \\ & = & \min_{\lambda \in [0,1]} G(\lambda),
\end{eqnarray}
where in \eqref{184} we use the fact that $\lambda \to \lambda^r$ is convex. Hence,
\begin{equation}
\min_{\lambda \in [0,1]} G(\lambda) \leq \min_{\lambda \in [0,1]} F(\lambda).
\end{equation}
Furthermore, provided that we are in the nontrivial regime $\|X\|_r > \|Z\|_r$, the minimum of $G$ is attained at $\lambda^* = 1$. This justifies choosing $\hat X$ as in  \eqref{Xhatrlarge} in the proof of \eqref{eq:ub2}. 

To justify the choice of $\hat X$ in \eqref{Xhat1} (also in \eqref{Xhat2}), which leads the tightening of \eqref{eq:ub2} for $r \in [1, 2]$ in \eqref{eq:ub} (also in \eqref{eq:ub3}), we bound $r$-th norm by second norm, and we note that by the independence of $X$ and $Z$,
\begin{equation}\label{choice3}
\|(1-\lambda) X + \lambda Z\|_2^2 \leq (1-\lambda)^2 \|X\|_2^2 + \lambda^2 \|Z\|_2^2.
\end{equation}
The function $H(\lambda) = (1-\lambda)^2 a + \lambda^2 b$ that appears on the right-hand side of \eqref{choice3} attains its minimum at $\lambda^* = \frac{a}{a+b}$. The minimum of $H$ is $\frac{ab}{a+b}$, which is always smaller than $\min(a,b)$. That is why we chose $\hat X$ as in \eqref{Xhat1} for $r \in [1, 2]$, and we chose $\hat X$ as in \eqref{Xhat2} for $r>2$. \\

2) Using Corollary \ref{relative-entr-gen}, if $r = 2$ one may rewrite our bound in terms of the rate distortion function of a Gaussian source as follows: 
\begin{align}\label{Rgauss}
\Rd \geq \mathbb R_{G_X}(d) - \log \sqrt{\pi e} - \Delta_p,
\end{align}
where $\Delta_p$ is defined in \eqref{delta-p}, and where
\begin{equation}
\mathbb R_{G_X}(d) = \frac{1}{2} \log  \frac{\sigma^2}{d} 
\end{equation}
is the rate distortion function of a Gaussian source with the same variance $\sigma^2$ as $X$. It is well known that for arbitrary source and mean-square distortion (see e.g. \cite[Chapter 10]{cover2012elements})
\begin{equation}
 \Rd \leq \mathbb R_{G_X}(d). \label{Rgu}
\end{equation}
By taking $p=2$ in \eqref{Rgauss}, we obtain
\begin{equation}
0 \leq \mathbb R_{G_X}(d) - \Rd \leq \frac{1}{2} \log \left( \frac{\pi e}{2} \right). \label{Rgl}
\end{equation}
The bounds in \eqref{Rgu} and \eqref{Rgl} tell us that the rate distortion function of any log-concave source is approximated by that of a Gaussian source. In particular, approximating $\Rd$ of an arbitrary log-concave source by 
\begin{equation}
 \hat{\mathbb R}_X(d) = \frac{1}{2} \log \frac{\sigma^2}{d} - \frac{1}{4} \log \left( \frac{\pi e}{2} \right),
\end{equation}
we guarantee the approximation error $|\Rd - \hat{\mathbb R}_X(d)|$ of at most $\frac{1}{4} \log \left( \frac{\pi e}{2} \right) \approx \frac{1}{2}$ bits.

\end{remark}

\vspace*{0.5cm}

We now establish bounds for the difference between the rate distortion function and the Shannon lower bound for log-concave random vectors in $\R^n$, $n \geq 1$. The proof of Theorem \ref{rate-vector} follows the same lines as the proof of Theorem \ref{rate-gen}, we thus only detail the most interesting case of mean-square error distortion, which reads as follows:

\begin{cor}\label{rate-vector-2}

Let $X$ be a random vector in $\R^n$. If $\frac{1}{n} \|X\|_2^2 > d$, then
\begin{equation}
\Rd - \Rslb \leq D(X||G_X) + \frac{n}{2} \log \left( \frac{\frac{1}{n} \|X\|_2^2}{|K_X|^{\frac{1}{n}}} \right).
\end{equation}

If $\frac{1}{n} \|X\|_2^2 \leq d$, then $\Rd = 0$.

\end{cor}

\begin{proof}
Denote for brevity $\sigma^2 = \frac{1}{n} \|X\|_2^2$.

If $\sigma^2 > d$, then we choose $\hat{X} = (1-\frac{d}{\sigma^2})(X+Z)$, where $Z \sim \mathcal{N}(0, \frac{\sigma^2 d}{\sigma^2 - d} \cdot I_n)$ is independent of $X$. This choice is admissible since
\begin{equation}
\frac{1}{n} \|X-\hat{X}\|^2_2 = \frac{1}{n} \left( \frac{d^2}{\sigma^4} n \sigma^2 + \left( 1 - \frac{d}{\sigma^2} \right)^2 \frac{n \sigma^2 d}{\sigma^2 - d} \right) = d.
\end{equation}
Thus,
\begin{equation}
\Rd \leq h(X+Z) - h(Z) \leq \frac{n}{2} \log \left( 2 \pi e \frac{\|X+Z\|_2^2}{n} \right) - h(Z) = \frac{n}{2} \log \frac{\sigma^2}{d}.
\end{equation}
Hence,
\begin{equation}
\Rd - \Rslb \leq \frac{n}{2} \log(2 \pi e \sigma^2) - h(X) = D(X||G_X) + \frac{n}{2} \log \frac{\sigma^2}{|K_X|^{\frac{1}{n}}} .
\end{equation}

If $\sigma^2 \leq d$, then we can choose $\hat{X} \equiv 0$. Hence, $\Rd = 0$.
\end{proof}

Let us consider the rate distortion function under covariance matrix constraint:
\begin{equation}
\R_X^{cov}(d) = \inf_{P_{\hat{X}|X} \colon |K_{X-\hat{X}}|^\frac{1}{n} \leq d} I(X;\hat{X}), \label{eq:Rdcov}
\end{equation}
where the infimum is taken over all joint distributions satisfying the determinant constraint $|K_{X-\hat{X}}|^\frac{1}{n} \leq d$. For this distortion measure, we have the following bound.

\begin{thm}\label{rate-vector-cov}

Let $X$ be a symmetric log-concave random vector in $\R^n$. If $|K_X|^{\frac{1}{n}} > d$, then
\begin{equation}
0 \leq \R_X^{cov}(d) - \Rslb \leq D(X||G_X) \leq \frac{n}{2} \log(2 \pi e \, c(n)),
\end{equation}
with $c(n) = \frac{n^2 e^2}{(n+2)4 \sqrt{2}}$. If, in addition, $X$ is unconditional, then $c(n) = \frac{e^2}{2}$.

If $|K_X|^{\frac{1}{n}} \leq d$, then $\R_X^{cov}(d) = 0$.

\end{thm}

\begin{proof}
If $|K_X|^{\frac{1}{n}} > d$, then we choose $\hat{X} = \left( 1-\frac{d}{|K_X|^{\frac{1}{n}}} \right)(X+Z)$, where $Z \sim \mathcal{N}\left( 0, \frac{d}{|K_X|^{\frac{1}{n}}-d} \cdot K_X\right)$ is independent of $X$. This choice is admissible since by independence of $X$ and $Z$,
\begin{equation}
K_{X-\hat{X}} = \left( \frac{d}{|K_X|^{\frac{1}{n}}} \right)^2 K_X + \left( 1 -\frac{d}{|K_X|^{\frac{1}{n}}} \right)^2 K_Z,
\end{equation}
hence, using the fact that $K_X$ and $K_Z$ are proportionals,
\begin{equation}
|K_{X-\hat{X}}|^{\frac{1}{n}} = \left( \frac{d}{|K_X|^{\frac{1}{n}}} \right)^2 |K_X|^{\frac{1}{n}} + \left( 1 - \frac{d}{|K_X|^{\frac{1}{n}}} \right)^2 |K_Z|^{\frac{1}{n}} = d.
\end{equation}
Upper-bounding the rate-distortion function by the mutual information between $X$ and $\hat X$, we have
\begin{align}
\R_X^{cov}(d) &\leq h(X+Z) - h(Z) \\
&\leq \frac{n}{2} \log \frac{|K_X|^{\frac{1}{n}}}{d} . \label{eq:rcovu}
\end{align}
Since the Shannon lower bound for covariance constraint coincides with that for the mean-square error constraint, 
\begin{equation}
\R_X^{cov}(d) \geq \Rslb  = h(X) - \frac n 2 \log (2 \pi e d).
\end{equation}
On the other hand, using \eqref{eq:rcovu}, we have
\begin{align}
\R_X^{cov}(d) - \Rslb &\leq \frac{n}{2} \log(2 \pi e |K_X|^{\frac{1}{n}}) - h(X) \\
&= D(X||G_X) \\
&\leq \frac{n}{2} \log(2 \pi e c(n)), \label{eq:cova}
\end{align}
where \eqref{eq:cova} follows from Corollary \ref{relative-entr-vector}.

If $|K_X|^{\frac{1}{n}} \leq d$, then we put $\hat{X} \equiv 0$, which leads to $\R_X^{cov}(d) = 0$.
\end{proof}

\section{New bounds on the capacity of memoryless additive channels}
\label{channel-capacity}

In this section, we study the capacity of memoryless additive channels when the noise is a log-concave random vector. Recall that the capacity of such a channel is
\begin{equation}
{\mathbb C}_Z(P) = \sup_{X \colon \frac{1}{n} \|X\|_2^2 \leq P} I(X;X+Z). \label{eq:cz}
\end{equation}

Notice that,
\begin{align}
I(X;X+Z) &= h(X+Z)-h(X+Z | X) \\
&= h(X+Z)-h(Z | X) \\
&= h(X+Z)-h(Z).
\end{align}
Hence, \eqref{eq:cz} can be re-written as
\begin{equation}
{\mathbb C}_Z(P) = \sup_{X \colon \frac{1}{n} \|X\|_2^2 \leq P} h(X+Z)-h(Z).
\end{equation}

We compare the capacity ${\mathbb C}_Z$ of a channel with log-concave additive noise with the capacity of the Gaussian channel.

\begin{proof}[Proof of \thmref{cap-variable}]
The lower bound is a consequence of the entropy power inequality, as mentioned in \eqref{saddle-point}. To obtain the upper bound, we first use the upper bound on the differential entropy \eqref{h2} to conclude that
\begin{equation}\label{cap-variable-1}
h(X+Z) \leq \frac{1}{2} \log(2 \pi e (P + \Var{Z})),
\end{equation}
for every random variable $X$ such that $\|X\|_2^2 \leq P$. By combining \eqref{cap-variable-1} and \eqref{p=2-gen}, we deduce that
\begin{equation}
\begin{aligned}
{\mathbb C}_Z(P) &= \sup_{X \colon \|X\|_2^2 \leq P} h(X+Z)-h(Z) \\
& \leq \frac{1}{2} \log(2 \pi e (P + \Var{Z})) - \frac{1}{2} \log(4 \Var{Z}) \\ 
& = \frac{1}{2} \log \left( \frac{\pi e}{2} \left( 1 + \frac{P}{\Var{Z}} \right) \right),
\end{aligned}
\end{equation}
which is the desired result.
\end{proof}

\begin{proof}[Proof of Theorem \ref{cap-vector}]
The lower bound is a consequence of the entropy power inequality, as mentioned in \eqref{saddle-point}. To obtain the upper bound, we write
\begin{eqnarray}
h(X+Z) - h(Z) & \leq & \frac{n}{2} \log \left( 2 \pi e |K_{X+Z}|^{\frac{1}{n}} \right) - h(Z) \label{cap1} \\ & \leq & \frac{n}{2} \log \left( 2 \pi e \frac{\|X\|_2^2 + \|Z\|_2^2}{n} \right) - h(Z) \label{cap2} \\ & \leq & \frac{n}{2} \log \left( 2 \pi e \frac{\|X\|_2^2 + \|Z\|_2^2}{n} \right) - \frac{n}{2} \log  \frac{|K_Z|^{\frac{1}{n}}}{c(n)}  \label{cap3} \\ & \leq & \frac{n}{2} \log \left( 2 \pi e ~ c(n) \left( \frac{\frac{1}{n} \|Z\|_2^2}{|K_Z|^{\frac{1}{n}}} + \frac{P}{|K_Z|^{\frac{1}{n}}} \right) \right),
\end{eqnarray}
where $c(n) = \frac{n^2 e^2}{(n+2)4 \sqrt{2}}$ in general, and $c(n) = \frac{e^2}{2}$ if $Z$ is unconditional. Inequality \eqref{cap1} is obtained from the upper bound on the differential entropy \eqref{compare-entr-vect}. Inequality \eqref{cap2} is a consequence of the arithmetic-geometric mean inequality. Finally, inequality \eqref{cap3} is obtained from Theorem \ref{vector}.
\end{proof}

\section{Conclusions}

Several recent results show that the entropy of log-concave probability densities have nice properties. For example, reverse entropy power inequalities, strengthened and stable versions of the entropy power inequality were recently obtained for log-concave random vectors (see e.g. \cite{BM}, \cite{BNG}, \cite{BNT}, \cite{T1}, \cite{T2}, \cite{CFP}). This line of developments suggest that, in some sense, log-concave random vectors behave like Gaussians.
 
Our work follows this line of results, by establishing a new lower bound on differential entropy for log-concave random variables in \eqref{eq:var-gen}, for log-concave random vectors with possibly dependent coordinates in \eqref{eq:vec}, and for $\gamma$-concave random variables in \eqref{convex-eq}. We made use of the new lower bounds in several applications. First, we derived reverse entropy power inequalities with explicit constants for uncorrelated, possibly dependent log-concave random vectors in \eqref{eq:rev-EPI} and \eqref{eq:rev-EPI-vec}. We also showed a universal bound on the difference between the rate-distortion function and the Shannon lower bound for log-concave random variables in \figref{fig:slb-gen} and \figref{fig:slb}, and for log-concave random vectors in \eqref{eq:rate-vec-iso}. Finally, we established an upper bound on the capacity of memoryless additive noise channels when the noise is a log-concave random vector in \eqref{cap-log} and \eqref{eq:cap-vec-iso}.

Under the Gaussian assumption, information-theoretic limits in many communication scenarios admit simple closed-form expressions. Our work demonstrates that at least in three such scenarios (source coding, channel coding and joint source-channel coding), the information-theoretic limits admit closed-form approximations if the Gaussian assumption is relaxed to the log-concave one. We hope that the approach will be useful in gaining insights into those communication and data processing scenarios in which the Gaussianity of the observed distributions is violated but the log-concavity is preserved.

\section*{Acknowledgment.}

The authors would like to thank the anonymous referee for pointing out that the bound \eqref{eq:varthm} and, up to a factor $2$, the bound \eqref{eq:xqxp} also apply to the non-symmetric case if $p \geq 1$ (Theorem \ref{variable-gen}, Proposition \ref{2-gen}).

\bibliographystyle{IEEEtran}
\bibliography{logconcave}

\end{document}